\DeclareMathAlphabet{\mathcal}{OMS}{cmsy}{m}{n}
\newcommand\bcmdtab{\noindent\bgroup\tabcolsep=0pt%
  \begin{tabular}{@{}p{10pc}@{}p{20pc}@{}}}
\newcommand\ecmdtab{\end{tabular}\egroup}
\newcommand{\rp}{\mathfrak{R}}
\newcommand{\shy}{\mathsf{shy}}
\newcommand{\terms}{\mathit{terms}}
\newcommand{\const}{\mathit{const}}
\newcommand{\mods}{\mathit{mods}}
\newcommand{\fmods}{\mathit{fmods}}
\newcommand{\EV}{\mathbf{EV}}
\newcommand{\UV}{\mathbf{UV}}
\newcommand{\V}{\mathbf{V}}
\newcommand{\C}{\mathbf{C}}
\newcommand{\Y}{\mathbf{Y}}
\newcommand{\smooth}{smooth}
\newcommand{\CS}{\mathcal{CS}}
\newcommand{\new}[1]{#1}
\newcommand{\rem}[1]{}
\newcommand{\beforeProof}{\vspace{-0em}}
\newcommand{\beforeSec}{\vspace{-0em}}
\newcommand{\beforeSubsec}{\vspace{-0em}}
\newcommand{\befAftVectors}{\vspace{-0em}}
\newcommand{\beforeCaption}{\vspace{-0em}}
\newcommand{\afterCaption}{\vspace{-0em}}
\title[Finite model reasoning over existential rules]
  {Finite model reasoning over existential rules}
  \author[G. Amendola, N. Leone, M. Manna]
         {GIOVANNI AMENDOLA, NICOLA LEONE, MARCO MANNA\\
         Department of Mathematics and Computer Science, University of Calabria, Italy\\
         \email{$\{$amendola,leone,manna$\}$@mat.unical.it}}
\newtheorem{lemma}{Lemma}[section]
\newtheorem{definition}{Definition}[section]
\newtheorem{theorem}{Theorem}[section]
\newtheorem{example}{Example}[section]
\newtheorem{proposition}{Proposition}[section]
\begin{document}

\label{firstpage}

\maketitle

\begin{abstract}
Ontology-based query answering (OBQA) asks whether a Boolean conjunctive query is satisfied by all models of a logical theory consisting of a relational database paired with an ontology. The introduction of existential rules (i.e., Datalog rules extended with existential quantifiers in rule-heads) as a means to specify the ontology gave birth to \mbox{Datalog+/-,} a framework that has received increasing attention in the last decade, with focus also on decidability and finite controllability to support effective reasoning. Five basic decidable fragments have been singled out: linear, weakly-acyclic, guarded, sticky, and shy. Moreover, for all these fragments, except shy, the important property of finite controllability has been proved, ensuring that a query is satisfied by all models of the theory iff it is satisfied by all its finite models. In this paper we complete the picture by demonstrating that finite controllability of OBQA holds also for shy ontologies, and it therefore applies to all basic decidable Datalog+/- classes. To make the demonstration, we devise a general technique to facilitate the process of (dis)proving finite controllability of an arbitrary ontological fragment.
\end{abstract}

  \begin{keywords}
Existential rules, Datalog, Finite controllability, Finite model reasoning,
Query answering.
  \end{keywords}


\newcommand{\nop}[1]{}
\newcommand{\fmodels}{\,{\models_{\mathsf{fin}}}\,}
\newcommand{\wsfmodels}{\,{\models_{\mathsf{wsf}}}\,}

\beforeSec

\section{Introduction}
\label{sec:intro}

The problem of answering a 
Boolean query $q$ against a logical theory consisting of an extensional database $D$
paired with an ontology $\Sigma$ 
is attracting the increasing attention of scientists in various fields 
of Computer Science, ranging from 
Artificial Intelligence~\cite{DBLP:journals/ai/BagetLMS11,%
	DBLP:journals/ai/CalvaneseGLLR13,%
	DBLP:journals/ai/GottlobKKPSZ14} 
to Database Theory~\cite{DBLP:journals/tods/BienvenuCLW14,%
	DBLP:journals/tods/GottlobOP14,%
	Bourhis:2016:GDT:3014437.2976736} 
and Logic~\cite{PerezUrbina2010186,%
	DBLP:journals/corr/BaranyGO13,%
	DBLP:conf/icalp/GottlobPT13ICALP}. 
This problem, called {\em ontology-based query answering}, for short
OBQA~\cite{DBLP:conf/dlog/CaliGL09},
is usually stated as \mbox{$D \cup \Sigma \models q$,}
and it is equivalent to checking whether $q$ is satisfied 
by all models of $D \cup \Sigma$ according to the standard approach of first-order logics, yielding an open world semantics.

Description Logics~\cite{Baader:2003:DLH:885746} and Datalog$^\pm$~\cite{DBLP:conf/icdt/CaliGL09}
have been recognized as the two main families 
of formal knowledge representation languages
to specify $\Sigma$, while
union of (Boolean) conjunctive queries, U(B)CQs for short,
is the most common and studied formalism to express $q$.
For both these families, OBQA is
generally undecidable \cite{DBLP:conf/icdt/Rosati07,DBLP:journals/jair/CaliGK13}.
Hence, a number of syntactic decidable fragments 
of the above ontological languages have been singled out.
However, decidability alone is not the only desideratum.
%
For example, a good balance between computational complexity and
expressive power is, without any doubt, of high importance too.
But there is another property that is turning out to be as 
interesting as challenging to prove:
it goes under the name of {\em finite controllability}~\cite{Rosati:2006:DFC:1142351.1142404}.
An ontological fragment $\mathcal{F}$ is finitely controllable if,
for each triple $\langle D, \Sigma, q\rangle$ with $\Sigma \in \mathcal{F}$, it holds that
$D \cup \Sigma \not\models q$ implies that there is a finite
model $M$ of $D \cup \Sigma$ such that $M \not\models q$.
This is usually stated as $D \cup \Sigma \models q$ if, and 
only if, $D \cup \Sigma \fmodels q$ 
(where $\fmodels$ stands for entailment under finite models),
as the ``only if'' direction is always trivially true.
%
And there are contexts, like in  databases~\cite{DBLP:journals/jcss/JohnsonK84,Rosati:2006:DFC:1142351.1142404,DBLP:journals/corr/BaranyGO13}, 
in which reasoning with respect to finite models is preferred.

\newcommand{\mysize}[1]{{\small #1}}
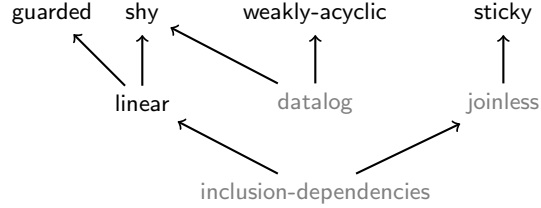
\begin{figure}[t!]
	\begin{center}
		\begin{tikzpicture}
		[->,shorten >=1pt,auto,thick,main node/.style={font=\sffamily\bfseries}]
		
		\node[main node] (1) {\mysize{$\mathsf{shy}$}};
		\node[main node] (6) [right of=1, node distance=2.3cm] {$\mathsf{weakly}$-$\mathsf{acyclic}$};		
		\node[main node] (2) [right of=6,node distance=2.5cm] {\mysize{$\mathsf{sticky}$}};
		\node[main node] (4) [below of=6,node distance=1.2cm] {\textcolor{gray}{\mysize{$\mathsf{datalog}$}}};
		\node[main node] (3) [left of=4,node distance=2.3cm] {\mysize{$\mathsf{linear}$}};		
		\node[main node] (5) [left of=1,node distance=1.2cm] {\mysize{$\mathsf{guarded}$}};
		\node[main node] (7) [right of=4,node distance=2.5cm] {\textcolor{gray}{\mysize{$\mathsf{joinless}$}}};
		\node[main node] (8) [below of=4,node distance=1.2cm] {\textcolor{gray}{\mysize{$\mathsf{inclusion}$-$\mathsf{dependencies}$}}};			
		\draw (4) to node {} (1);
		\draw (3) to node {} (1);
		\draw (3) to node {} (5);
		\draw (4) to node {} (6);
		\draw (7) to node {} (2);
		\draw (8) to node {} (3);
		\draw (8) to node {} (7);
		\end{tikzpicture}
	\end{center}
	\beforeCaption
	\caption{Taxonomy of the basic Datalog$^\pm$ classes.}\label{fig:taxonomy}
	\afterCaption
\end{figure}

In this paper we focus on the Datalog$^\pm$ family, 
which has been introduced with the aim of ``closing the gap between the Semantic
Web and databases''~\cite{DBLP:journals/ws/CaliGL12} to provide the {\em Web of Data}
with scalable formalisms that can benefit from existing database technologies.
In fact, Datalog$^\pm$ generalizes two well-known subfamilies of Description Logics called $\mathcal{EL}$ and {\em DL-Lite},
which collect the basic tractable languages for OBQA
in the context of the Semantic Web and databases.
In particular, we consider ontologies where 
$\Sigma$ is a set of {\em existential rules}, 
each of which is a first-order formula $\rho$ of the form
$
\forall {\bf X} \forall {\bf Y} ( \phi({\bf X},{\bf Y}) \rightarrow \exists {\bf Z} p({\bf X},{\bf Z}))$,
where the {\em body} $\phi({\bf X},{\bf Y})$ of $\rho$ is a conjunction of atoms, and 
the {\em head} $p({\bf X},{\bf Z})$ of $\rho$ is a single atom.

The main decidable Datalog$^\pm$ fragments rely on the following five syntactic properties:
{\em weak-acyclicity}~\cite{DBLP:journals/tcs/FaginKMP05}, 
{\em guardedness}~\cite{DBLP:journals/jair/CaliGK13}, 
{\em linearity}~\cite{DBLP:journals/ws/CaliGL12}, 
{\em stickiness}~\cite{DBLP:journals/pvldb/CaliGP10}, %
and {\em shyness}~\cite{DBLP:conf/kr/LeoneMTV12}. 
And these properties underlie
the basic classes called
$\mathsf{weakly}$-$\mathsf{acyclic}$,
$\mathsf{guarded}$,
$\mathsf{linear}$,
$\mathsf{sticky}$, and 
$\mathsf{shy}$, respectively.
Several variants and combinations of these classes
have been defined and studied too~\cite{DBLP:conf/kr/BagetLM10,%
	DBLP:conf/ijcai/KrotzschR11,%
	DBLP:journals/ai/CaliGP12,%
	DBLP:conf/datalog/CiviliR12,%
	DBLP:journals/tplp/GottlobMP13},
as well as semantic properties subsuming the syntactic ones 
\cite{DBLP:conf/ijcai/BagetLMS09,DBLP:conf/kr/LeoneMTV12}.

The five basic classes above are pairwise uncomparable, except for
$\mathsf{linear}$ which is strictly contained in both $\mathsf{guarded}$ and $\mathsf{shy}$, as depicted in Figure~\ref{fig:taxonomy}.
Interestingly, both $\mathsf{weakly}$-$\mathsf{acyclic}$ and $\mathsf{shy}$ strictly contain
$\mathsf{datalog}$ ---the well-known class with rules 
of the form $\forall {\bf X} \forall {\bf Y} ( \phi({\bf X},{\bf Y}) \rightarrow p({\bf X}))$,
where existential quantification has been dropped.
Moreover, $\mathsf{sticky}$ strictly contains $\mathsf{joinless}$ ---the class collecting sets of rules where each body contains no repeated variable.
The latter, introduced by \citeN{DBLP:conf/lics/GogaczM13}
to prove that $\mathsf{sticky}$ is finitely controllable, plays a
central role also in this paper.
Finally, both $\mathsf{linear}$ and $\mathsf{joinless}$ strictly contain
$\mathsf{inclusion}$-$\mathsf{dependencies}$ ---the well-known class of 
relational database dependencies collecting sets of rules 
with one single body atom and no repeated variable.

Under arbitrary models, OBQA can be
reduced to the problem of answering $q$ over a universal (or canonical)
model $U$ that can be homomorphically embedded into every other model (both 
finite and infinite) of $D \cup \Sigma$.
Therefore, $D \cup \Sigma \models q$ if, and only if,  $U \models q$.
A way to compute a universal model is to employ the so called {\em chase} procedure.
%
Starting from $D$, the chase
``repairs'' violations of rules
by repeatedly adding new atoms ---introducing fresh values, called {\em nulls}, whenever required by an existential variable--- 
until a fixed-point satisfying all rules is reached.
In the classical setting, the chase is therefore sound and complete.
But when {\em finite model reasoning} \new{(namely reasoning over finite models only, here denoted by} $\fmodels$) is required, then the chase 
is generally uncomplete, unless ontologies are finitely controllable.
%
%
Hence, proving this property is of utmost importance, 
especially \new{in} those contexts where finite model reasoning is relevant.

%
%
%
%

Finite controllability of $\mathsf{weakly}$-$\mathsf{acyclic}$ 
comes for free since every ontology here admits a finite universal model, 
computed by a variant of the chase procedure which goes under the name of 
{\em restricted chase}~\cite{DBLP:journals/tcs/FaginKMP05}.
Conversely, the proof of this property for the subsequent three classes 
has been a very different matter.
Complex, yet intriguing, constructions
have been devised for
$\mathsf{linear}$~\cite{Rosati:2006:DFC:1142351.1142404,DBLP:journals/corr/BaranyGO13},
$\mathsf{guarded}$~\cite{DBLP:journals/corr/BaranyGO13}, and more recently for
$\mathsf{sticky}$~\cite{DBLP:conf/lics/GogaczM13}.
To complete the picture, we have addressed the same problem for $\mathsf{shy}$
and get the following positive result, which is the main contribution of the paper.

\begin{theorem}\label{th:main}	
	Under $\mathsf{shy}$ ontologies, $D \cup\Sigma \models q$ if, and only if,
	$D\cup\Sigma\fmodels q$.
\end{theorem}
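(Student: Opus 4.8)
The plan is to establish the non-trivial ``if'' direction in contrapositive form: assuming $D\cup\Sigma\not\models q$, construct a \emph{finite} model $M$ of $D\cup\Sigma$ with $M\not\models q$, i.e., such that $D\cup\Sigma\fmodels q$ does not hold. Since the chase $U=\mathit{chase}(D,\Sigma)$ is a universal model, $D\cup\Sigma\not\models q$ yields $U\not\models q$; and since $q$ is a fixed Boolean conjunctive query it has a bounded number $k$ of atoms. Hence it suffices to produce a finite model $M$ of $D\cup\Sigma$ that is \emph{$k$-sound} for $U$ --- every BCQ with at most $k$ atoms true in $M$ is true in $U$ --- for then $M\not\models q$. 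This reduction is exactly the general technique to be isolated: a fragment $\mathcal{F}$ is finitely controllable if and only if, for every instance over $\mathcal{F}$ and every $k$, a finite $k$-sound model of $D\cup\Sigma$ exists (the ``only if'' being immediate, since $U$ is itself $k$-sound and finite controllability supplies a finite model avoiding the chosen query). Disproving finite controllability of a fragment then amounts to exhibiting an instance that, for some $k$, admits no finite $k$-sound model.

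To build $M$ for a $\mathsf{shy}$ ontology I would exploit the structural properties of the shy chase: the atoms of $U$ over constants alone form a finite set $C$, and --- because shyness forbids join variables and dangerously ``co-invaded'' head variables from ever carrying or merging nulls --- the null-bearing part of $U$ is organised above $C$ as a well-founded, forest-like superstructure of bounded local complexity, exhibiting only finitely many \emph{local types}. Fixing a depth $N$ large relative to $k$ and to the number of local types, I would keep $C$, truncate $U$ at depth $N$, and discharge each of the finitely many pending existential obligations at the frontier by \emph{redirecting} it onto an already-present element of the same local type, chosen so that every cycle thereby created has length exceeding $k$. A pumping argument over the finitely many types guarantees such a redirection exists and produces a finite structure $M$ satisfying every rule of $\Sigma$; moreover the redirection ensures that every match of a $\le k$-atom BCQ in $M$ lifts to a match in $U$, so $M$ is $k$-sound, whence $M\not\models q$. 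Alternatively, the same outcome can be reached by passing through a $\mathsf{joinless}$ simulation of the shy program on an enriched schema and invoking finite controllability of $\mathsf{joinless}$ as a black box, in line with the central role that class plays here.

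The main obstacle is making the frontier redirection meet two competing demands at once. On one hand it must be \emph{total} --- every existential obligation discharged --- so that $M\models\Sigma$; this is where shyness is essential, since a join variable never carries a null and therefore rule bodies can only be ``reattached'' across the seam through the untouched finite core $C$, so that identifying elements cannot break a rule with a body join. On the other hand the redirection must be \emph{injective up to radius $k$}, so that no BCQ with at most $k$ atoms is spuriously created. Calibrating $N$ as a function of $k$, $|D|$ and $\Sigma$ so that totality and bounded-radius injectivity hold simultaneously --- and, on the joinless route, verifying that the shy conditions survive the schema enrichment --- are the delicate points; the remainder is routine bookkeeping about local types and homomorphisms.
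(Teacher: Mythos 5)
Your main construction has a genuine gap at its load-bearing step. The claim that shyness makes the null-bearing part of the chase a ``well-founded, forest-like superstructure of bounded local complexity'' is too strong: the second shy condition explicitly permits a head atom to receive two distinct nulls coming from two \emph{different} body atoms (they need only not be attacked by the same existential variable), so chase atoms can merge nulls originating in independent branches and the structure above the constant core is not a forest. This cross-linking is exactly what separates $\shy$ from $\mathsf{linear}$ and $\mathsf{guarded}$, where truncate-and-redirect (Rosati; B\'ar\'any--Gottlob--Otto) does work, and it is the reason the paper does \emph{not} attempt a direct pumping/blocking construction. Your sketch asserts that a redirection exists which is simultaneously total and injective up to radius $k$, but for shy ontologies that assertion carries essentially the full difficulty of the theorem and is left unproved; note that even for the join-free case ($\mathsf{joinless}$) the analogous statement is a hard standalone result of Gogacz and Marcinkowski.

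The one-sentence alternative you mention -- simulate the shy program by a joinless one and invoke finite controllability of $\mathsf{joinless}$ as a black box -- is indeed the paper's actual route, but the delicate point is not the one you flag. The canonical rewriting $\Sigma^c$ of a shy ontology is \emph{still not joinless}: it retains ``harmless'' rules with body joins. The paper's work consists in (a) splitting $\Sigma^c$ into a joinless active part $\Sigma^c_a$ and a harmless part, (b) introducing well-supported finite models to replace arbitrary finite models, and (c) using a propagation ordering to rename terms of a well-supported finite model of $D^c\cup\Sigma^c_a$ into a finite model of all of $D^c\cup\Sigma^c$ admitting a homomorphism back -- this is where shyness is actually used, to guarantee the harmless rules cannot fire on nulls. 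Without these three tools (or a worked-out substitute for them), neither of your two routes closes the argument.
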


%
For the proof, we design in Section~\ref{sec:tools}
and exploit in Section~\ref{sec:challenge}
a general technique (our second contribution), called {\em canonical rewriting},
to facilitate the process of (dis)proving finite controllability
of an arbitrary ontological fragment of existential rules.
By exploiting this technique, we can 
immediately (re)confirm that $\mathsf{linear}$ is finitely controllable since
$\mathsf{inclusion}$-$\mathsf{dependencies}$ is.
In addition, we prove (our third contribution) that
$\mathsf{sticky}$-$\mathsf{join}$\new{~\cite{DBLP:journals/ai/CaliGP12}, generalizing both $\mathsf{sticky}$ and $\mathsf{linear}$,}
 is finitely controllable since
$\mathsf{sticky}$ is.
However, differently from $\mathsf{linear}$ and 
$\mathsf{sticky}$-$\mathsf{join}$, 
the canonical rewriting of a $\mathsf{shy}$ ontology ---although it is \new{simpler and} still a $\mathsf{shy}$ ontology--- does not \new{immediately} fall in any other known class.
Therefore, to prove that $\shy$ is finitely controllable,
we devise three technical tools on top of the canonical rewriting
from which we are able to exploit the fact that $\mathsf{joinless}$
is finitely controllable.

%
%
%
%
%

\newcommand{\myPar}[1]{{\bf \em #1.}}

\beforeSec
\section{Ontology-based query answering}
\label{sec:Prel}

\myPar{Basics}~Let $\bf{C}$, $\bf{N}$ and $\bf{V}$ 
denote pairwise disjoint discrete sets
of \textit{constants}, \textit{nulls} and \textit{variables},
respectively.
An element $t$ of ${\bf T} = {\bf C\cup N\cup V}$ is called \textit{term}. 
%
%
An \textit{atom} \new{$\alpha$} is a labeled tuple $p(t_1,\ldots,t_m)$, where 
$p$ is a predicate symbol,
$m$ is the {\em arity} \new{of both $p$ anf $\alpha$}, and 
$t_1,\ldots,t_m$ are terms.
An atom is \textit{simple} if it contains no repeated term.
\new{We denote by $pred(\alpha)$ the predicate symbol $p$, and by $\alpha[i]$ the $i$-th term $t_i$ of the $\alpha$.}
We also consider {\em propositional} atoms, which are simple atoms of arity $0$ 
written without brackets. 
Given two sets $A$ and $B$ of atoms, a homomorphism from $A$ to $B$ 
is a mapping $h:{\bf T}\rightarrow{\bf T}$ such that
$c\in {\bf C}$ implies $h(c)=c$,
and also $p(t_1,\ldots,t_m)\in A$ implies $p(h(t_1),\ldots,h(t_m))\in B$.
As usual\new{, we denote by} $h(A) = \{ p(h(t_1),\ldots,h(t_m)): p(t_1,\ldots,t_m)\in A\}\subseteq B$.
%
%
An \textit{instance} $I$ is a discrete set of atoms where each term is either a constant or a null.

\myPar{Syntax}~A \textit{database} $D$ is a finite null-free instance.
An {\em (existential)} {\em rule} $\rho$
is a first-order formula  
$
\forall {\bf X} \forall {\bf Y} ( \phi({\bf X},{\bf Y}) \rightarrow \exists {\bf Z} p({\bf X},{\bf Z}))$,
%
where $body(\rho) = \phi({\bf X},{\bf Y})$ is a conjunction of atoms,
and $head(\rho) = p({\bf X},{\bf Z})$ is an atom. 
Constants may occur in $\rho$.
If ${\bf Z} = \emptyset$, then $\rho$ is $\mathsf{datalog}$ rule. 
An {\em ontology} $\Sigma$ is a set  of rules.
\new{For each rule $\rho$ of $\Sigma$, we denote by ${\bf V}(\rho)$ the set of variables appearing in $\rho$, by $\EV(\rho)$ the set of all existential variables of $\rho$, and
by $\UV(\rho)$ the set of all universal variables of $\rho$.}
A \textit{union of Boolean conjunctive query}, UBCQ for short, $q$ is a first-order expression 
of the form
$\exists {\bf Y}_1 \psi_1({\bf Y}_1) \vee \ldots \vee \exists {\bf Y}_k \psi_k({\bf Y}_k)$,
where 
each $\psi_j({\bf Y}_j)$ is a conjunction of atoms.
%
Constants may occur also in $q$. 
In case $k = 1$, then $q$ is simply called BCQ.
%

%

%
\myPar{Semantics}~Consider a triple $\langle D, \Sigma, q \rangle$ as above.
An instance $I$ {\em satisfies} a rule $\rho \in \Sigma$, denoted by $I\models \rho$, if whenever there is a homomorphism $h$ 
from $body(\rho)$ to $I$,
then there is a homomorphism $h'\supseteq h|_{\bf X}$ 
from $\{head(\rho)\}$ to $I$.
Moreover, $I$ {\em satisfies} $\Sigma$, 
denoted by $I\models \Sigma$, if $I$ satisfies each rule of $\Sigma$.
The  {\em models} of $D\cup\Sigma$, denoted by $\mathit{mods}(D,\Sigma)$,
consist of the set $\{I: I\supseteq D\mbox{ and } I\models \Sigma\}$.
An instance $I$ {\em satisfies} $q$, written $I \models q$, if
there is a homomorphism from some $\psi_j({\bf Y}_j)$ to $I$.
%
Also, $q$ is {\em true} over $D \cup \Sigma$, written $D\cup \Sigma\models q$,
if each model of $D\cup \Sigma$ satisfies $q$.

\myPar{The chase}~Consider a logical theory $\langle D, \Sigma\rangle$ as above.
\new{A rule $\rho$ of $\Sigma$ is {\em applicable} to an instance $I$ if there is
	a homomorphism $h$ from $body(\rho)$ to $I$
	that maps the existential variables of $\rho$
	to different nulls not occurring in $I$.
	If so, $\langle \rho, h\rangle (I) = I \cup h(head(\rho))$
	defines a {\em chase step}.
The {\em chase procedure}~\cite{DBLP:conf/pods/DeutschNR08}
of $D \cup \Sigma$ is any sequence 
$I_0=D\subset I_1\subset\ldots\subset I_m\subset\ldots$ of instances 
obtained by applying exhaustively the rules of $\Sigma$ in a fair \new{(e.g., breadth-first)} fashion in such a way that, for each $i>0$, $\langle \rho,h\rangle (I_{i-1}) = I_i$
defines a chase step for some $\rho$ and $h$.
We call $chase(D,\Sigma)$ the (possibly infinite) instance $\bigcup_{i\geqslant 0} I_i$.}
%
%
\new{Importantly, different chase steps introduce different nulls. 
This variant of the chase is called {\em oblivious},
and defines a family of isomorphic instances, namely
any two such instances are equal modulo renaming of nulls.
Hence, without loss of generality, it is common practice to consider 
the oblivious chase as deterministic and its least fixpoint
as unique. 
The {\em restricted} version of this procedure imposes a further condition 
on each chase step: $I \not \models h'(head(\rho))$,
where $h' = h|_{\UV(\rho)}$. 
Differently from the oblivious one, it defines a family of homomorphically equivalent instances, each generically denoted by $rchase(D,\Sigma)$.
It is well-known that ($r$)$chase(D,\Sigma)$ is a \textit{universal} model of $D \cup \Sigma$, namely for each $M\in \mathit{mods}(D,\Sigma)$, there is a homomorphism from $chase(D,\Sigma)$ to $M$.
Hence, given a UBCQ $q$, it holds that
($r$)$chase(D,\Sigma)\models q$ if, and only if, $D\cup\Sigma\models q$ \mbox{\cite{DBLP:journals/tcs/FaginKMP05}.}}


\rem{
\myPar{Shy ontologies}~Consider a chase step \mbox{$\langle \rho, h\rangle (I)=I'$} employed in the construction of $\mathit{chase}(D,\Sigma)$.
If $\Sigma$ is $\shy$, then 
its underlying properties~\cite{DBLP:conf/kr/LeoneMTV12} guarantee that:
$(1)$ if $X$ occurs in two different atoms of $body(\rho)$, then $h(X) \in {\bf C}$; and
$(2)$ if $X$ and $Y$ occur both in $head(\rho)$ and in two different atoms of $body(\rho)$, then
\mbox{$h(X) = h(Y)$ implies $h(X) \in {\bf C}$.}}

\myPar{Finite controllability}~The {\em finite models} of a theory $D\cup\Sigma$, denoted by $\mathit{fmods}(D,\Sigma)$, are the finite instances in
$\{I\in \mathit{mods}(D,\Sigma): |I|\in \mathbb{N}\}$.
An ontological fragment $\mathcal{F}$ is {\em finitely controllable} if,
for each database $D$,
for each ontology $\Sigma$ of $\mathcal{F}$, and
for each UBCQ $q$, it holds that
$D \cup \Sigma \not\models q$ implies that there exists a finite
model $M$ of $D \cup \Sigma$ such that $M \not\models q$.
This is formally stated as $D \cup \Sigma \models q$ if and 
only if $D \cup \Sigma \fmodels q$, or equivalently 
$\mathit{chase}(D, \Sigma) \models q$ if and 
only if $D \cup \Sigma \fmodels q$.

\new{
\subsection{Datalog$^\pm$ fragments}

Fix a database $D$, an ontology $\Sigma$, and a chase step 
involving some pair
$\langle \bar{\rho}, h\rangle$. 
To lighten the presentation, we assume that 
different rules of $\Sigma$ share no variable.  
Also, for every $m$-ary predicate $p$ and every
$i \in \{1,\ldots,m\}$, the pair $(p,i)$
is called {\em position} and denoted by $p[i]$.
Finally, given a set $A$ of atoms, a term $t$ occurs in $A$ at position $p[i]$
if there is $\alpha \in A$ s.t. $pred(\alpha) = p$ and $\alpha[i] = t$.

\myPar{Local conditions} 
%
%
$\Sigma$ belongs to:
$(i)$ $\mathsf{datalog}$ whenever  $\rho \in \Sigma$ implies \mbox{$\EV(\rho) = \emptyset$;} 
$(ii)$ $\mathsf{inclusion}$-$\mathsf{dependencies}$ whenever  $\rho \in \Sigma$ implies that
$\rho$ contains only simple atoms and $|body(\rho)| =1$;
$(iii)$ $\mathsf{linear}$ whenever  $\rho \in \Sigma$ implies $|body(\rho)| =1$; 
$(iv)$ $\mathsf{guarded}$ whenever  $\rho \in \Sigma$ implies that there is an atom 
of $body(\rho)$ containing all the variables of $\UV(\rho)$;
$(v)$ $\mathsf{joinless}$ whenever  $\rho \in \Sigma$ implies that 
$head(\rho)$ is a simple atom and $body(\rho)$ contains no repeated variables.

\myPar{Weak-acyclicity~\cite{DBLP:journals/tcs/FaginKMP05}}
Informally, $\Sigma \in \mathsf{weakly}$-$\mathsf{acyclic}$ guarantees that:
if $X$ occurs in $body(\bar{\rho})$ at position $p[i]$ and $h(X) \in {\bf N}$, then
the number of distinct nulls occurring in $\mathit{rchase}(D,\Sigma)$ 
at position $p[i]$ are finitely many.
Formally, the labeled graph $G(\Sigma)$ associated to $\Sigma$ is defined as the
pair $\langle N,A\rangle$, where 
$(i)$ $N$ collects all the positions $p[1],\ldots,p[m]$
for each $m$-ary predicate $p$ occurring in $\Sigma$;
$(ii)$ $(p[i], r[j], \texttt{plain}) \in A$ if there is a rule $\rho \in \Sigma$ and a variable $X$ of $\rho$ such that: $X$ occurs in the body of $\rho$ at position $p[i]$ and $X$ occurs in the head of $\rho$ at position $r[j]$; and
$(iii)$ $(p[i], r[j], \texttt{special}) \in A$ if there is a rule $\rho \in \Sigma$, a universal variable $X$ occurring also in the head of $\rho$, and an existential variable $Z$ of $\rho$ such that: $X$ occurs in the body of $\rho$ at position $p[i]$ and $Z$ occurs in the head of $\rho$ at position $r[j]$.
Ontology $\Sigma$ belongs to $\mathsf{weakly}$-$\mathsf{acyclic}$ if $G(\Sigma)$ 
has no cycle going through an arc labeled as $\texttt{special}$.

\myPar{Stickiness~\cite{DBLP:journals/ai/CaliGP12}}
Informally, $\Sigma \in \mathsf{sticky}$ guarantees that:
if $X$ occurs multiple times in $body(\bar{\rho})$, then 
$X$ occurs in $head(\bar{\rho})$ and 
$h(X)$ belongs to every atom of $\mathit{chase}(D,\Sigma)$
that depends on $h(head(\bar{\rho}))$.
Formally, a variable $X$ of $\Sigma$ is {\em marked} if 
$(i)$ there is a rule $\rho\in\Sigma$ 
such that $X$ occurs in $body(\rho)$ but not in $head(\rho)$; or
$(ii)$ there are two rules $\rho, \rho' \in\Sigma$
such that a marked variable occurs in $body(\rho)$ at some position $p[i]$ and
$X$ occurs in $head(\rho')$ at position $p[i]$ too.
Ontology $\Sigma$ belongs to  $\mathsf{sticky}$ if, for each $\rho \in \Sigma$, the following condition is satisfied: if $X$ occurs multiple times in $body(\rho)$, then $X$ is not marked.
A more refined condition identifies interesting cases 
in which it is safe to allow rules containing some marked variable 
that occurs multiple times but in a single body atom only.
This refinement gives rise to $\mathsf{sticky}$-$\mathsf{join}$, 
generalizing both $\mathsf{sticky}$ and $\mathsf{linear}$.

\myPar{Shyness~\cite{DBLP:conf/kr/LeoneMTV12}}
Informally, $\Sigma \in \shy$ guarantees that:
$(1)$ if $X$ occurs in two different atoms of $body(\bar{\rho})$, then $h(X) \in {\bf C}$; and
$(2)$ if $X$ and $Y$ occur both in $head(\bar{\rho})$ and in two different atoms of $body(\bar{\rho})$, then
$h(X) = h(Y)$ implies $h(X) \in {\bf C}$.
Formally, 
consider an existential variable $X$ of $\Sigma$.
Position $p[i]$ is {\em invaded} by $X$ if there is a rule $\rho$ of $\Sigma$ such
that:
$(i)$ $X$ occurs in $head(\rho)$ at position $p[i]$, or
$(ii)$ some universal variable $Y$ of $\rho$ is {\em attacked} by $X$ ---namely 
$Y$ occurs in $body(\rho)$ only at positions invaded by $X$--- and
it also occurs in $head(\rho)$ at position $p[i]$.
%
%
%
%
A universal variable is \textit{protected}
if it is attacked by no existential variable.
Ontology $\Sigma$ belongs to $\shy$ if, for each $\rho \in \Sigma$, the following conditions are both satisfied:
$(1)$ if $X$ occurs in two different atoms of $body(\rho)$, then $X$ is protected; and
$(2)$ if $X$ and $Y$ occur both in $head(\rho)$ and in two different atoms of $body(\rho)$, then
$X$ and $Y$ are not attacked by the same variable.}

\beforeSec
\section{Canonical rewriting}\label{sec:tools}


In this section we design a general technique 
to facilitate the process of (dis)proving finite controllability
of an arbitrary ontological fragment of existential rules.
%
%
More specifically, from a triple $\langle D,\Sigma,q\rangle $ we build
the triple $\langle D^c,\Sigma^c,q^c\rangle$ enjoying the following properties: 
$(1)$ $D^c$ is propositional database;
$(2)$  $\Sigma^c$ are constant-free rules containing only simple atoms;
$(3)$ $q^c$ is a constant-free UBCQ with only simple atoms;
$(4)$ $\mathit{chase}(D^c,\Sigma^c)$ is a constant-free instance
containing only simple atoms; and
$(5)$ there is a ``semantic'' correspondence between $\mathit{mods}(D,\Sigma)$ and $\mathit{mods}(D^c,\Sigma^c)$.
By exploiting these properties, one can 
apply the technique shown in Figure~\ref{fig:chain1}.

%
%
%
%
%
%
%
%

\begin{figure}[t!]
	\begin{center}
		\begin{tikzpicture}
		[->,shorten >=1pt,auto,thick,main node/.style={font=\sffamily\bfseries}]
		
		\node[main node] (1) {$D\cup\Sigma\models q$};
		
		\node[main node] (2) [right of=1,node distance=4.5cm] 
		{$D\cup\Sigma\fmodels q$};
		
		\node[main node] (3) [below of=2,node distance=1.4cm] 
		{$D^c\cup \Sigma^c \fmodels q^c$};
		
		\node[main node] (4) [below of=1,node distance=1.4cm] 
		{$D^c\cup\Sigma^c \models q^c$};

		\node[main node] (11) [right of=2,node distance=2.5cm] {$D\cup\Sigma\models q$};
		
		\node[main node] (21) [right of=11,node distance=4.5cm] 
		{$D\cup\Sigma\fmodels q$};
		
		\node[main node] (31) [below of=21,node distance=1.4cm] 
		{$D^c\cup \Sigma^c \fmodels q^c$};
		
		\node[main node] (41) [below of=11,node distance=1.4cm] 
		{$D^c\cup\Sigma^c \models q^c$};

		\draw[->] (2) to[right] node [below] {\scriptsize to prove this...} (1);
		
		\draw[dashed] (2) to node [left] {\scriptsize Th.~\ref{thm:fin-to-fin}} (3);
		
		\draw[dashed] (4) to node [right] {\scriptsize Th.~\ref{thm:inf-to-inf}} (1);
		
		\draw[dashed] (3) to[left] node [above] {\scriptsize ...show this is true} (4);

		\draw[->] (21) to[right] node [below] {\scriptsize to disprove this...} (11);
		
		\draw[dashed] (31) to node [left] {\scriptsize Th.~\ref{thm:fin-to-fin}} (21);
		
		\draw[dashed] (11) to node [right] {\scriptsize Th.~\ref{thm:inf-to-inf}} (41);
		
		\draw[dashed] (31) to[left] node [above] {\scriptsize ...show this is false} (41);
		
		\end{tikzpicture}
	\end{center}
	\beforeCaption
	\caption{Application of the canonical rewriting.}\label{fig:chain1}
	\afterCaption
\end{figure}
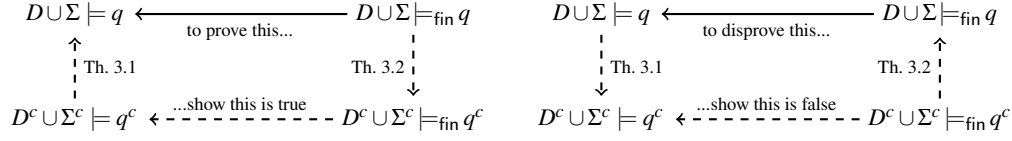

\beforeSubsec

\subsection{Overview}\label{sec:overview}

%
Consider the database $D=\{\mathit{person}(\mathit{tim})$, 
$\mathit{person}(\mathit{john})$, 
$\mathit{fatherOf}(\mathit{tim},\mathit{john})\}$, and the
ontology
$\Sigma=$
$\{\mathit{person}(X) \rightarrow \exists Y \, \mathit{fatherOf}(Y,X)$; 
$\mathit{fatherOf}(X,Y) \rightarrow \mathit{person}(X)\}$. 
Let $p$, $f$, $c_1$ and $c_2$ be shorthands of 
$\mathit{person}$, $\mathit{fatherOf}$, $\mathit{tim}$ and $\mathit{john}$, respectively.
Hence, $\mathit{chase}(D,\Sigma)$ is the instance 
$D\cup \{f(n_1,c_1), f(n_2,c_2)\}
\cup
\{p(n_i), f(n_{i+2},n_i) \}_{i>0}$\new{, where each $n_i$ denotes a distinct null of {\bf N}}.
From $D$ we construct the propositional database 
$D^c = \{p_{[c_1]}$, $p_{[c_2]}$, $f_{[c_1,c_2]}\}$ 
obtained by encoding in the predicates the tuples of $D$.
Then, from $\Sigma$ we construct $\Sigma^c$ collecting the following rules:

\befAftVectors

\newcommand{\evid}[1]{\textcolor{red}{#1}}
{\small
\[\begin{array}{rclrclrclrcl}
\evid{p_{[c_1]}} &\hspace*{-0.2cm} \evid{\rightarrow} &\hspace*{-0.2cm} \evid{\exists Y f_{[1,c_1]}(Y)} & \ \ \ \ \
f_{[c_1,c_1]} &\hspace*{-0.2cm} \rightarrow &\hspace*{-0.2cm} p_{[c_1]} &  \ \ \ \ \
f_{[c_1,1]}(Y) &\hspace*{-0.2cm} \rightarrow &\hspace*{-0.2cm} p_{[c_1]} &  \ \ \ \ \
f_{[1,1]}(X) &\hspace*{-0.2cm} \rightarrow &\hspace*{-0.2cm} p_{[1]}(X)\\
\evid{p_{[c_2]}} &\hspace*{-0.2cm} \evid{\rightarrow} &\hspace*{-0.2cm} \evid{\exists Y f_{[1,c_2]}(Y)} &
\evid{f_{[c_1,c_2]}} &\hspace*{-0.2cm} \evid{\rightarrow} &\hspace*{-0.2cm} \evid{p_{[c_1]}} &
f_{[c_2,1]}(Y) &\hspace*{-0.2cm} \rightarrow &\hspace*{-0.2cm} p_{[c_2]} &
\evid{f_{[1,2]}(X,Y)} &\hspace*{-0.2cm} \evid{\rightarrow} &\hspace*{-0.2cm} \evid{p_{[1]}(X)}\\
\evid{p_{[1]}(X)} &\hspace*{-0.2cm} \evid{\rightarrow} &\hspace*{-0.2cm} \evid{\exists Y f_{[1,2]}(Y,X)} &
f_{[c_2,c_1]} &\hspace*{-0.2cm} \rightarrow &\hspace*{-0.2cm} p_{[c_2]} & 
\evid{f_{[1,c_1]}(X)} &\hspace*{-0.2cm} \rightarrow &\hspace*{-0.2cm}  \evid{p_{[1]}(X)}
& & &\\
 &   &  &
f_{[c_2,c_2]} &\hspace*{-0.2cm} \rightarrow &\hspace*{-0.2cm} p_{[c_2]} &
\evid{f_{[1,c_2]}(X)} &\hspace*{-0.2cm} \evid{\rightarrow} &\hspace*{-0.2cm} \evid{p_{[1]}(X)}
& & &
\end{array}\]
}

\befAftVectors

\noindent The predicates here encode tuples of terms consisting of database constants ($c_1$ and $c_2$)
and placeholders of nulls ($1$ and $2$).
Consider the first rule $\rho = p(X) \rightarrow \exists Y f(Y,X)$  
applied by the chase over $D\cup \Sigma$,
and $h = \{X \mapsto c_1, Y \mapsto n_1\}$ be its associated homomorphism.
Hence, $h(\mathit{body}(\rho))  = p(c_1)$ and $h(\mathit{head}(\rho)) = f(n_1,c_1)$.
Such an application is mimed by the ``sister'' rule
$\rho^c = p_{[c_1]} \rightarrow \exists Y f_{[1,c_1]}(Y)$. By exploiting the same
homomorphism we obtain 
$h(\mathit{body}(\rho^c))  = p_{[c_1]}$ and also $h(\mathit{head}(\rho^c)) = f_{[1,c_1]}(n_1)$.
Actually, the encoded tuple $[c_1]$ in $p_{[c_1]}$ 
says that the original twin atom $p(c_1)$ is unary 
and its unique term is exactly $c_1$.
Moreover, the encoded tuple $[1,c_1]$ in $f_{[1,c_1]}(n_1)$
says that the original twin atom $f(n_1,c_1)$ is binary,
that its first term is a null, and that its second term is exactly the constant $c_1$.
Since from predicate $f_{[1,c_1]}$ we only know that the first term is a null, 
it must be unary to keep the specific null value.
In the above construction, red rules are those applied by the chase on $D^c \cup \Sigma^c$.
For example, 
rule $f_{[1,c_1]}(X) \rightarrow p_{[1]}(X)$ mimics
$f(X,Y) \rightarrow p(X)$ when $X$ is mapped to a null and
$Y$ to $c_1$; and
rule $f_{[1,2]}(X,Y) \rightarrow p_{[1]}(X)$
mimics 
$f(X,Y) \rightarrow p(X)$ when $X$ and $Y$ are mapped to different nulls.
Hence, $\mathit{chase}(D^c,\Sigma^c)$ is:
$$D^c \cup
\{p_{[1]}(n_i)\}_{i>0}
\cup
\{f_{[1,c_1]}(n_1), f_{[1,c_2]}(n_2)\}
\cup
\{f_{[1,2]}(n_{i+2},n_i)\rem{, f_{[1,2]}(n_{i+3},n_{i+1})}\}_{i>0}.$$
%
%
%

As a result, the rewriting separates the interaction between the database constants
propagated body-to-head via universal variables and the nulls
introduced to satisfy existential variables. 
Also, since the 
predicates encode the ``shapes'' of the twin atoms ---namely $f_{[1,2]}(X,Y)$ means different nulls
while $f_{[1,1]}(X)$ the same null---
repeated variables are encoded too.
%
%
By following the same approach, we can rewrite also the query. 
Consider for example the BCQ $q=\exists X p(X), f(X,c_1)$.
%
%
Therefore, $q^c$ is the UBCQ:
$ (p_{[c_1]}, \, f_{[c_1,c_1]}) \ \vee \ (p_{[c_2]}, \, f_{[c_2,c_1]}) \ \vee \ 
(\exists X \, p_{[1]}(X), \, f_{[1,c_1]}(X)).$


\beforeSubsec

\subsection{Formal construction and properties}

Let us fix a triple $\langle D,\Sigma,q\rangle$ through the rest of this section.
%
Consider an atom $\alpha = p(t_1,\ldots,t_m)$ with terms over $\C\cup\V$.
The {\em canonical atom} of $\alpha$ is the atom $\alpha^c=p_{[\ell_1,\ldots,\ell_m]}(\tau_1,\ldots,$ $\tau_\mu)$, where:
$(a)$ $\ell_i = t_i$ if $t_i\in\C$;
$(b)$ $\ell_i = \ell_j$ if $t_i=t_j$; or
$(c)$ $\ell_i = 1+\mbox{max}(\{0\}\cup\{\ell_j\in\mathbb{N}: j<i\})$ if $t_i\in\V$  and $t_j\neq t_i \ \forall j<i$\rem{.}
%
and $\tau_i = V\in\V$, if there exists $t_j$ such that $\ell_j=i$ and $t_j=V$.
%
Moreover, given a set of atoms $A$,
we define $A^c=\{\alpha^c : \alpha\in A\}$, and give a rule $\rho$, we define $\rho^c$ as the rule so that $body(\rho^c)=body(\rho)^c$ and $head(\rho^c)=head(\rho)^c$.
%
For instance, let $\alpha=p(c_1,X,c_2,X,Y,Z,Y)$ be an atom. Then, the canonical atom $\alpha^c$ of $\alpha$ is given by $p_{[c_1,1,c_2,1,2,3,2]}(X,Y,Z)$.
Note that, by definition of $\tau_i$, for $i=1,\ldots,\mu$, we have that
the arity $\mu \leq m$ of the canonical atom is equal to $\mbox{max}(\{0\}\cup\{f(t_j)\in\mathbb{N}: j\leq m\})$.

\rem{For each rule $\rho$ of $\Sigma$, we denote by ${\bf V}(\rho)$ the set of variables appearing in $\rho$, by $\EV(\rho)$ the set of all existential variables in $\rho$, and
by $\UV(\rho)$ the set of all universal variables in $\rho$. 
E.g., if $\rho$ is the rule $p(X,Y),r(Y)\rightarrow \exists Z \, s(X,Y,Z)$, then $\V(\rho)=\{X,Y,Z\}$, $\EV(\rho)=\{Z\}$, and $\UV(\rho)=\{X,Y\}$.}

\begin{definition}[Safe and Canonical substitutions]
A map $\varsigma:\const(D\cup\Sigma)\cup\V\rightarrow \const(D\cup\Sigma)\cup\V$ is called \textit{canonical substitution} 
if $\varsigma(c)=c$ for each $c\in\const(D\cup\Sigma)$.
Moreover, we say that a canonical substitution $\varsigma$ is \textit{safe} w.r.t. a rule $\rho\in\Sigma$ if 
$\varsigma(\UV(\rho)) \subseteq \const(D\cup\Sigma)\cup\UV(\rho)$, and
$\varsigma(V)=V$, for each $V\in\EV(\rho)$.
\end{definition}

Intuitively, a safe substitution maps each existential variable \rem{in}\new{to} itself and no universal variable is mapped to an existental one.
As usual, given a set of atom\new{s} $A$, we denote by $\varsigma(A)=\{p(\varsigma(t_1),\ldots,\varsigma(t_m)\new{)}: p(t_1,\ldots,t_m)\}$, and given a rule $\rho$, we denote by $\varsigma(\rho)$ the rule such that $body(\varsigma(\rho))=\varsigma(body(\rho))$ and
$head(\varsigma(\rho))=\varsigma(head(\rho))$.

\begin{example}\label{ex:safe-sub}
Consider $D=\{  r(c_1,c_3)  \}$ and $\Sigma$ consisting of the following rules:
$\rho_1 =  r(Y_1,Z_1),p(W_1,X_1,X_1,Y_1)$ 
$ \rightarrow 
 \exists T_1 g(X_1,Y_1,T_1,X_1,Z_1)$ and
$\rho_2 =   s(X_2),t(Y_2) 
\rightarrow 
r(X_2,Y_2)$.
For instance, $\varsigma_1=$ $\{c_1\mapsto c_1, c_3\mapsto c_3, Y_1\mapsto X_1,$ $Z_1\mapsto c_3,$ $W_1\mapsto Y_1,$ $X_1\mapsto X_1,$ $T_1\mapsto T_1\}$ and
$\varsigma'_1=$ $\{c_1\mapsto c_1, c_3\mapsto c_3, Y_1\mapsto c_1,$ $Z_1\mapsto X_1,$ $W_1\mapsto c_1,$ $X_1\mapsto Y_1,$ $T_1\mapsto T_1\}$
are safe substitutions w.r.t. $\rho_1$.
Indeed, $\const(D\cup\Sigma)=\{c_1,c_3\}$,
$\UV(\rho_1)=\{W_1,X_1,Y_1,Z_1\}$,
$\EV(\rho_1)=\{ T_1\}$, the existential variable $T_1$ is mapped to itself, and no other variable is mapped to an existential one. Moreover, $\varsigma_1(\rho_1)=r(X_1,c_3),p(Y_1,X_1,X_1,X_1)\rightarrow \exists T_1 g(X_1,X_1,T_1,X_1,c_3)$ and $\varsigma'_1(\rho_1)=r(c_1,X_1),p(c_1,Y_1,Y_1,c_1)\rightarrow \exists T_1 g(Y_1,c_1,T_1,Y_1,X_1)$.
\hfill $\lhd$
\end{example}

\rem {Note that, if $\rho$ is a $\mathsf{datalog}$ rule (for instance, rule $\rho_2$ in the previous example), then each canonical substitution 
is a safe substitution w.r.t. $\rho$, as $\EV(\rho)=\emptyset$.}
We denote by $\CS$ the set of all canonical substitutions and by
$\ss(\rho)\subseteq \CS $ the set of all safe substitutions w.r.t. $\rho$.
Given a set of atoms $A$ [resp. a rule $\rho$] and a canonical substitution [resp. safe substitution] $\varsigma$, we say that
$\varsigma(A)^c$ [resp. $\varsigma(\rho)^c$] is the 
\textit{canonical set of atoms} w.r.t $A$
[resp. \textit{canonical rule w.r.t. $\rho$}] and $\varsigma$.
Observe that two different canonical substitutions could produce two isomorphic canonical set of atoms. For instance, let $A = \{p(X,Y)\}$, and consider
$\varsigma =\{X\mapsto X, \ Y\mapsto Y\}$ and
$\varsigma' =\{X\mapsto Y, \ Y\mapsto X\}$. Then,
$\varsigma(A)^c = \{p_{[1,2]}(X,Y)\}$, and
$\varsigma'(A)^c = \{p_{[1,2]}(Y,X)\}$ are isomorphic set of atoms. 
%
%
\rem{Therefore, we denote by $\CS^{*}$ [resp. $\ss^{*}(\rho)$] a maximal set of canonical substitutions [resp. safe substitutions w.r.t. $\rho$] that produce canonical set of atoms [resp. canonical rules] pairwise non-isomorphic.}%
\new{Therefore, to avoid redundancies, we denote by $\CS^{*}$ [resp. $\ss^{*}(\rho)$] 
any arbitrary maximal subset of $\CS$ [resp. of $\ss(\rho)$] 
producing canonical set of atoms [resp. canonical rules] 
containing no two isomorphic elements.}
%

We denote by $\Sigma^c$ the set of all canonical rules 
$\{ \varsigma(\rho)^c: \rho\in\Sigma \mbox{ and } \varsigma\in\ss^{*}(\rho) \}$,
and we call it the \textit{canonical rewriting} of $\Sigma$. 
Also, given a UBCQ $q$ of the form
$\exists {\bf Y}_1 \psi_1({\bf Y}_1) \vee \ldots \vee \exists {\bf Y}_k \psi_k({\bf Y}_k)$,
we denote by $q^c$ the disjunction
$\bigvee_{\varsigma_1 \in \CS^{*}} 
\varsigma_1(\psi_1({\bf Y}_1))^c \vee \ldots \vee 
\bigvee_{\varsigma_k \in \CS^{*}} 
\varsigma_k(\psi_k({\bf Y}_k))^c$.
and we call it the \textit{canonical rewriting} of $q$. 
Finally, we call $D^c$ the {\em canonical rewriting} of $D$.
%
%

\new{
\begin{proposition}
	The triple $\langle D^c,\Sigma^c,q^c\rangle$ can be constructed from $\langle D,\Sigma,q\rangle$ in 
	polynomial time (in data complexity).
\end{proposition}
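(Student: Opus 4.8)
The plan is to prove the bound componentwise, treating $\Sigma$ and $q$ as fixed so that only $D$ varies; the recurring point is that every quantity that grows with $|D|$ grows only polynomially, with an exponent depending solely on fixed parameters of $\Sigma$ and $q$. (This is also why the statement is restricted to data complexity: if the arities and the numbers of variables per rule were part of the input, the same counting would give only an exponential bound.) The easiest component is $D^{c}$: each atom $p(c_1,\dots,c_m)\in D$ contributes exactly one propositional atom $p_{[\ell_1,\dots,\ell_m]}$, whose encoded tuple is produced by a single left-to-right scan of $(c_1,\dots,c_m)$, copying constants and reusing the label already assigned to an earlier occurrence of the same constant. Since predicate arities are bounded by a constant coming from $\Sigma$ and $q$, each canonical atom is built in essentially constant time, so $D^{c}$ is obtained in time $O(|D|)$ and $|D^{c}|\le|D|$.

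Next I would bound $\Sigma^{c}$. Fix $\rho\in\Sigma$; both $|\UV(\rho)|$ and the size of $\rho$ are constants. A safe substitution with respect to $\rho$ is determined, as far as $\varsigma(\rho)^{c}$ is concerned, by its action on $\UV(\rho)$, which sends each universal variable into $\const(D\cup\Sigma)\cup\UV(\rho)$; since $|\const(D\cup\Sigma)|\le|\const(D)|+|\const(\Sigma)|=O(|D|)$, there are only $O(|D|^{|\UV(\rho)|})$ such actions, and they are enumerated within the same bound. For each, $\varsigma(\rho)^{c}$ is computed as above; it has a constant number of atoms and mentions $O(1)$ constants of $D\cup\Sigma$, hence has bit-size $O(\log|D|)$. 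To keep only one representative per isomorphism class I would attach to each candidate a fixed canonical renaming of its variables and discard string duplicates, at cost $O(\log|D|)$ per candidate. Summing over the constantly many rules of $\Sigma$ shows that $\Sigma^{c}$ has polynomially many rules and is built in polynomial time.

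The query rewriting $q^{c}$ is handled analogously. For each of the constantly many disjuncts $\psi_j({\bf Y}_j)$ of $q$, the inner disjunction $\bigvee_{\varsigma_j\in\CS^{*}}\varsigma_j(\psi_j({\bf Y}_j))^{c}$ depends only on the restriction of $\varsigma_j$ to the (constantly many) variables ${\bf Y}_j$, so up to isomorphism it reduces to a disjunction of $O(|D|^{|{\bf Y}_j|})$ constant-shape canonical conjunctions, obtained by enumerating the maps ${\bf Y}_j\to\const(D\cup\Sigma)\cup{\bf Y}_j$ and deduplicating exactly as in the previous paragraph; hence $q^{c}$ is produced in polynomial time as well.

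The only step I expect to require genuine care is the handling of the ``maximal subset containing no two isomorphic elements'' in the definitions of $\CS^{*}$ and of the safe-substitution sets: stated in the abstract this looks like a potentially expensive selection problem, but the key observation is that in data complexity each candidate canonical object has bit-size $O(\log|D|)$, so both the number of candidates (polynomial in $|D|$) and the cost of a single isomorphism test (logarithmic) are under control, and a greedy deduplication pass suffices. Assembling the three componentwise bounds yields the claimed polynomial-time (data complexity) construction of $\langle D^{c},\Sigma^{c},q^{c}\rangle$.
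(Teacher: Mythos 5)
The paper states this proposition without an accompanying proof, so there is no official argument to compare against; your proof is correct and supplies exactly the counting one would expect. The essential point --- that $\Sigma^c$ and $q^c$ grow with $|D|$ only through the choice of images in $\const(D\cup\Sigma)$ for the constantly many universal (resp.\ query) variables, giving $O(|D|^{|\UV(\rho)|})$ (resp.\ $O(|D|^{|{\bf Y}_j|})$) candidates and making the data-complexity restriction indispensable --- is identified and handled correctly, as is the reduction of the ``maximal non-isomorphic subset'' in $\CS^{*}$ and $\ss^{*}(\rho)$ to a finite, polynomially bounded enumeration followed by deduplication under a fixed canonical variable renaming (the stated $O(\log|D|)$ bit-size per candidate is an inessential imprecision, since constants may occupy more bits, but every candidate is still of polynomial size and the bound stands).
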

}

\begin{example}\label{ex:can-rew}
Consider the ontology $\Sigma$ with the safe substitutions $\varsigma_1$ and $\varsigma'_1$ w.r.t. $\rho_1$ of the Example~\ref{ex:safe-sub}.
Therefore, we obtain the canonical rules:
$\varsigma_1(\rho_1)^c=  
r_{[1,c_3]}(X_1),p_{[1,2,2,2]}(Y_1,X_1) \rightarrow \exists T_1 g_{[1,1,2,1,c_3]}(X_1,T_1)$ and
$\varsigma'_1(\rho_1)^c =
r_{[c_1,1]}(X_1),p_{[c_1,1,1,c_1]}(Y_1) \rightarrow  \exists T_1 g_{[1,c_1,2,1,3]}(Y_1,T_1,X_1)$.
Moreover, let $\varsigma_2$ and $\varsigma'_2$ be the safe substitutions containing
$\{X_2\mapsto X_2, Y_2\mapsto X_2\}$ 
and
$\{X_2\mapsto c_1, Y_2\mapsto c_3\}$ w.r.t. $\rho_2$, respectively.
Hence, we have
$\varsigma_2(\rho_2)^c=  s_{[1]}(X_2),t_{[1]}(X_2) \rightarrow r_{[1,1]}(X_2)$ and
$\varsigma'_2(\rho_2)^c= s_{[c_1]},t_{[c_3]}\rightarrow r_{[c_1,c_3]}$.
Therefore, $\varsigma_1(\rho_1)^c$,
$\varsigma'_1(\rho_1)^c$,
$\varsigma_2(\rho_2)^c$, and
$\varsigma'_2(\rho_2)^c$ are (some of the) rules of $\Sigma^c$. 
\hfill $\lhd$
\end{example}

We consider a function $\rp$ from
the set of atoms of $D^c\cup\Sigma^c$ to the set of atom of $D\cup\Sigma$.
For each atom $\alpha=a_{[s_1,\ldots,s_m]}(\sigma_1,\ldots,\sigma_\mu)$, we 
build an atom $\rp(\alpha)=a(t_1,\ldots,t_m)$ such that:
$(a)$ $t_i = s_i$ if  $s_i\in{\bf C}$;
$(b)$ $t_i = \sigma_i$ if $s_i=k$  and $s_j\neq k$, for each $j<i$; or
$(c)$ $t_i = \sigma_j$ if $s_i=s_j$, for some $j<i$.

%
For instance, let $\alpha=p_{[1,c_1,2,1,c_2,1,2]}(X,Y)$ be an atom of the logical theory $D^c\cup\Sigma^c$. Then, $\rp(\alpha)=p(X,c_1,Y,X,c_2,X,Y)$.
We call $\rp$ the \textit{unpacking function}.
Given a set of atoms $A$ of $D^c\cup\Sigma^c$, we denote by
$\rp(A)=\{ \rp(\alpha) : \alpha\in A\}$ the corresponding set of atoms of $D\cup\Sigma$.
If $I$ is an instance, we call $\rp(I)$ the \textit{unpacked instance} of $I$.
Given a rule $\rho^c$ in $\Sigma^c$, we denote by
$\rp(\rho^c)$ the rule obtained applying $\rp$ to each atom in $\rho^c$,
i.e. $\rp(\rho^c): \rp(body(\rho^c))\rightarrow \rp(head(\rho^c))$, 
and we call it the \textit{unpacked rule} of $\rho^c$.
Similarly, we denote by $\rp(q^c)$ the query obtained applying $\rp$ to the atoms of the UBCQ $q^c$, and we call it the \textit{unpacked query} of $q^c$.
%
%
%
%
Informally, the unpacking function acts as the inverse operator to the canonical rewriting.
Moreover, it enjoys an interesting and useful property: the chase of a logical theory 
coincides with the unpacking of the chase constructed from of the same theory given in canonical form:
\rem{This is formally stated in the following proposition.}

\begin{proposition}\label{th:inf-to-inf}
Consider a set $\Sigma$ of existential rules.
For each database $D$ and for each UBCQ $q$, it holds that $\rp(\mathit{chase}(D^c,\Sigma^c)) = \mathit{chase}(D,\Sigma)$ and $\rp(q^c) \equiv q$.
\end{proposition}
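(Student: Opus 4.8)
The plan is to handle the two assertions separately: the query equivalence follows from a short semantic argument, while the chase identity is proved by a step‑by‑step induction on the length of the chase. For the query equivalence I would first isolate the elementary fact that $\rp$ is a left inverse of the canonical‑atom construction, i.e.\ $\rp(\alpha^c)=\alpha$ for every atom $\alpha$ whose terms lie in $\C\cup\V$ (and, reading nulls as variables, also for terms in $\C\cup\N$), and hence $\rp(\varsigma(A)^c)=\varsigma(A)$ for every set of atoms $A$ and every canonical substitution $\varsigma$. Applying this disjunct by disjunct gives $\rp(q^c)=\bigvee_j\bigvee_{\varsigma}\varsigma(\psi_j)$, the inner disjunction ranging over the chosen canonical substitutions. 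One inclusion is then immediate, since the identity is a canonical substitution and therefore some disjunct is isomorphic to $\psi_j$, so every model of $q$ is a model of $\rp(q^c)$; for the converse, each disjunct $\varsigma(\psi_j)$ is a homomorphic image of $\psi_j$ under $\varsigma$, so a homomorphism from $\varsigma(\psi_j)$ into an instance $I$ composes with $\varsigma$ to a homomorphism from $\psi_j$ into $I$ (both fix constants), and every model of $\rp(q^c)$ is a model of $q$. Thus $\rp(q^c)\equiv q$.

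For the chase identity I would first establish a structural invariant: every atom occurring in $\mathit{chase}(D^c,\Sigma^c)$ is a well‑formed canonical atom all of whose arguments are nulls (or is propositional). This holds for $D^c$, whose atoms are propositional because $D$ is null‑free, and it is preserved by chase steps, since matching a body atom of a canonical rule into such an instance forces an exact match of the position‑encoding subscript, so each variable‑argument of that body atom is sent to a null; consequently the head atom produced by the step carries only these nulls together with the fresh nulls supplied for the existential variables. A corollary is that $\rp$ is injective on the atoms of $\mathit{chase}(D^c,\Sigma^c)$, so that $\gamma\mapsto\rp(\gamma)$ and $\beta\mapsto\beta^c$ are mutually inverse bijections between the two families of atoms.

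Using that the oblivious chase is deterministic modulo renaming of nulls, I would then match the two chases one step at a time. The base case $\rp(D^c)=D$ is immediate. For the step, assume $\rp$ maps the length‑$k$ prefix of a fair sequence for $D^c\cup\Sigma^c$ onto the corresponding prefix $I_k$ for $D\cup\Sigma$. \emph{Forward:} given a chase step $\langle\rho,h\rangle$ on $I_k$, let $\varsigma_h$ be the safe substitution w.r.t.\ $\rho$ that sends every universal variable $X$ with $h(X)\in\C$ to the constant $h(X)$, sends the universal variables whose $h$‑image is a null to one representative per null, and fixes the existentials; then $\varsigma_h(\rho)^c$ is isomorphic to a rule of $\Sigma^c$, and $\varsigma_h(X)\mapsto h(X)$ extends to a homomorphism from $\mathit{body}(\varsigma_h(\rho)^c)$ into the current canonical instance (each $h(\alpha)$ with $\alpha\in\mathit{body}(\rho)$ lies in $I_k$, so by the induction hypothesis and injectivity its canonical atom lies in the canonical chase), and the resulting step adds an atom whose $\rp$‑image is exactly $h(\mathit{head}(\rho))$. \emph{Backward:} given a chase step $\langle\varsigma(\rho)^c,g\rangle$ on the canonical side, $\rp(\varsigma(\rho)^c)$ is the rule $\varsigma(\rho)$, an instance of $\rho$ under $\varsigma$, and the composite $g\circ\varsigma$ is a homomorphism from $\mathit{body}(\rho)$ into $I_k$ sending the existentials of $\rho$ to distinct fresh nulls, so $\langle\rho,g\circ\varsigma\rangle$ is a legal oblivious chase step whose new atom is $\rp$ of the new canonical atom. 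Performing the corresponding steps in lockstep yields fair oblivious chase sequences for the two theories, hence, up to the usual identification of nulls, $\rp(\mathit{chase}(D^c,\Sigma^c))=\mathit{chase}(D,\Sigma)$.

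The delicate part is the bookkeeping in the inductive step: one must choose the fresh nulls on the two sides consistently so that an equality (rather than a mere homomorphic equivalence) is obtained, argue that the substitution $\varsigma_h$ built in the forward direction may be replaced without harm by the isomorphic representative actually kept in the definition of $\Sigma^c$, and verify carefully that the structural invariant on canonical atoms is genuinely maintained — this invariant is precisely what makes $\rp$ behave as a true inverse rather than a lossy projection.
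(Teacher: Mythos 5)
Your proposal is correct and follows essentially the same route as the paper: induction on chase steps for the identity $\rp(\mathit{chase}(D^c,\Sigma^c))=\mathit{chase}(D,\Sigma)$, and a two-way containment argument for $\rp(q^c)\equiv q$ (identity substitution for one direction, composition with $\varsigma$ for the other). You are in fact more explicit than the paper on two points it leaves implicit --- the structural invariant making $\rp$ injective on the canonical chase, and the backward simulation of canonical chase steps --- but these are elaborations of the same argument, not a different one.
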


By exploiting the above proposition, we can now prove that a UBCQ $q$ is satisfied by all models of a theory $D\cup\Sigma$ if, and only if, each model of the canonical rewriting of the theory $D^c\cup\Sigma^c$ satisfies the canonical rewriting of the UBCQ $q^c$.


\begin{theorem}\label{thm:inf-to-inf}
$D\cup\Sigma\models q$ if, and only if, $D^c\cup\Sigma^c\models q^c$.
\end{theorem}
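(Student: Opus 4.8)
The plan is to reduce Theorem~\ref{thm:inf-to-inf} to Proposition~\ref{th:inf-to-inf}, which already gives us $\rp(\mathit{chase}(D^c,\Sigma^c)) = \mathit{chase}(D,\Sigma)$ and $\rp(q^c) \equiv q$. Recall from the preliminaries that for a UBCQ $q'$ and theory $D'\cup\Sigma'$ we have $D'\cup\Sigma'\models q'$ if and only if $\mathit{chase}(D',\Sigma')\models q'$. So it suffices to show the single equivalence
\[
\mathit{chase}(D,\Sigma)\models q \quad\Longleftrightarrow\quad \mathit{chase}(D^c,\Sigma^c)\models q^c .
\]
First I would rewrite both sides using Proposition~\ref{th:inf-to-inf}: the left-hand side becomes $\rp(\mathit{chase}(D^c,\Sigma^c))\models q$, and since $\rp(q^c)\equiv q$ this is $\rp(\mathit{chase}(D^c,\Sigma^c))\models \rp(q^c)$. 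Thus the whole theorem collapses to a purely local claim about the unpacking function: for an instance $I$ over canonical atoms and a canonical UBCQ $q^c$,
\[
\rp(I)\models \rp(q^c) \quad\Longleftrightarrow\quad I\models q^c .
\]

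The heart of the argument is therefore a homomorphism-transfer lemma for $\rp$. In the forward direction, suppose $h$ is a homomorphism from some disjunct $\rp(\psi_j)^c$-unpacked, i.e.\ $\rp(\varsigma_j(\psi_j)^c)$, into $\rp(I)$; I would show that the \emph{same} mapping $h$ (restricted to the variables of the canonical disjunct) is a homomorphism from $\varsigma_j(\psi_j)^c$ into $I$. The key observation is that $\rp$ is ``shape-preserving and shape-revealing'': by construction the subscript $[s_1,\dots,s_m]$ of a canonical predicate records exactly which argument positions of the unpacked atom carry constants (and which), and which positions are forced to be equal; hence two canonical atoms $\alpha,\beta$ over the same variables satisfy $\rp(\alpha)=\rp(\beta)$ iff $\alpha=\beta$, and more importantly, whenever $\rp(\alpha)$ maps homomorphically onto $\rp(\beta)$ via $h$ (mapping constants to themselves), the induced equalities and constant-occurrences already present in $\rp(\beta)$ are reflected back in the subscript of $\beta$, so $h(\alpha)=\beta$. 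For the converse direction the argument is symmetric and slightly easier: applying $\rp$ to a homomorphism $\alpha \mapsto \beta$ in $I$ yields a homomorphism $\rp(\alpha)\mapsto\rp(\beta)$ because $\rp$ commutes with variable renaming and fixes constants. Iterating over all atoms of the chosen disjunct and all disjuncts of the UBCQ gives the equivalence.

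The main obstacle I anticipate is the bookkeeping around \emph{which} canonical substitutions appear in $q^c$ versus which ``shapes'' actually occur in the instance $I = \mathit{chase}(D^c,\Sigma^c)$. A naive homomorphism from a query disjunct into $\rp(I)$ may merge two query variables, or send a query variable to a constant, in a way that does not literally match the particular disjunct $\varsigma_j(\psi_j)^c$ we started from --- it matches a \emph{coarser} shape. This is precisely why $q^c$ is defined as the disjunction over \emph{all} canonical substitutions $\varsigma_j\in\CS^{*}$ rather than just the identity: every such merging/instantiation of query variables is itself realized by some $\varsigma_j$, so the homomorphism into $\rp(I)$ that uses a coarser shape corresponds to a homomorphism into $I$ of the disjunct indexed by that coarser $\varsigma_j$. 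Making this correspondence precise --- showing that composing the ``merging part'' of $h$ with an appropriate $\varsigma_j$ lands us in $\CS^{*}$ up to isomorphism, and that the isomorphism does not affect satisfaction --- is the one genuinely delicate step; everything else is a routine unwinding of the definitions of $\cdot^c$ and $\rp$ together with Proposition~\ref{th:inf-to-inf}.
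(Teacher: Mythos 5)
Your proposal is correct and follows essentially the same route as the paper: reduce both sides to the chase via the standard chase-entailment theorem, invoke Proposition~\ref{th:inf-to-inf} to identify $\rp(\mathit{chase}(D^c,\Sigma^c))$ with $\mathit{chase}(D,\Sigma)$ and $\rp(q^c)$ with $q$, and then transfer homomorphisms across the unpacking function $\rp$ disjunct by disjunct. If anything, you are more explicit than the paper on the direction it dismisses as ``retracing the chain of implications,'' namely that a homomorphism into $\rp(I)$ which merges query variables or sends them to constants is absorbed by the disjunct of $q^c$ indexed by the corresponding coarser canonical substitution in $\CS^{*}$.
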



Note that, if $\Sigma$ is a constant-free ontology, then, for each model $M^c$ of $D^c\cup\Sigma^c$, $\rp(M^c)$ is a model of $D\cup\Sigma$.  
The request for a constant-free ontology is needed. 
Indeed, for instance, let $\Sigma=\{p(a) \rightarrow r(a); \ r(x) \rightarrow p(x)\}$. 
So that,
$\Sigma^c=\{p_{[a]} \rightarrow r_{[a]}; \ r_{[a]} \rightarrow p_{[a]};
\ \ r_{[1]}(V_1) \rightarrow p_{[1]}(V_1)\}$. 
Therefore, $M^c=\{p_{[1]}(a)\}$ is a model of $\Sigma^c$, but $\rp(M^c)=\{p(a)\}$ is not a model of $\Sigma$, as it does not satisfy the first rule.
However, we can overcome this problem considering the following class of models.
Given a model $M^c\in\mods(D^c,\Sigma^c)$, we say that $N^c$ is a \textit{\smooth}  instance of $M^c$ if 
there exists a bijective map $f:\terms(M^c)\rightarrow \terms(N^c)$ such that 
$f(n)=n$ for each null $n\in\terms(M^c)$;
$f(c)=n_c$ for each constant $c\in\terms(M^c)$, where $n_c$ is a fresh null;
and $f(M^c)=N^c$.
Note that a {\smooth} instance of a model $M^c$ is also a model of $D^c\cup\Sigma^c$ and it is also constant-free.
\begin{proposition}\label{th:fin-to-fin}
If $M^c\in\mods(D^c,\Sigma^c)$,
then $\rp(N^c)\in\mods(D,\Sigma)$, for each {\smooth} model $N^c$ of $M^c$.
\end{proposition}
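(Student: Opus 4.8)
The plan is to verify directly the two defining conditions of $\mods(D,\Sigma)$ for $\rp(N^c)$: that $\rp(N^c)\supseteq D$, and that $\rp(N^c)\models\rho$ for every $\rho\in\Sigma$. Three elementary facts will be used repeatedly: a \smooth{} model $N^c$ of $M^c$ is itself a model of $D^c\cup\Sigma^c$ and is constant-free, so every term occurring in $N^c$ --- hence in $\rp(N^c)$ --- is a null; directly from the definitions of the canonical atom and of the unpacking function, $\rp(\alpha^c)=\alpha$ for every atom $\alpha$ over $\C\cup\V$, and more generally $\rp(g(\alpha^c))=g(\alpha)$ for every map $g$ fixing constants; and rule satisfaction is invariant under renaming of variables. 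The database inclusion is immediate: $\rp$ is defined atomwise, hence monotone, and $D^c$ is by construction the set of propositional atoms $p_{[c_1,\ldots,c_m]}$ encoding the facts $p(c_1,\ldots,c_m)\in D$, so $\rp(D^c)=D$; since $N^c\supseteq D^c$, this gives $\rp(N^c)\supseteq D$.

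Next, fix $\rho\in\Sigma$ and a homomorphism $h$ from $body(\rho)$ to $\rp(N^c)$; one has to extend $h|_{\UV(\rho)}$ to a homomorphism of $\{head(\rho)\}$ into $\rp(N^c)$. Any constant occurring in $\rp(N^c)$ can enter only through the subscript of a canonical predicate, and those subscripts use only constants of $\const(D\cup\Sigma)$; this is exactly where smoothness is used, since $N^c$ carries no constant \emph{term}, so a universal variable of $\rho$ that $h$ sends to a constant is sent to such a ``legal'' one --- precisely the situation that must be guaranteed, as the remark following the proposition illustrates on a small counterexample. I would then read a safe substitution $\varsigma$ off $h$: take $\varsigma$ to be the identity on $\EV(\rho)$ and on $\const(D\cup\Sigma)$; for $X\in\UV(\rho)$ set $\varsigma(X)=h(X)$ if $h(X)$ is a constant, and otherwise let $\varsigma(X)$ be a fixed representative of $\{X'\in\UV(\rho):h(X')=h(X)\}$. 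By construction $\varsigma$ is safe w.r.t.\ $\rho$, hence $\varsigma(\rho)^c$ is, up to renaming of variables, one of the canonical rules of $\Sigma^c$, and therefore $N^c\models\varsigma(\rho)^c$.

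The remaining work is to transfer the match from $\rp(N^c)$ to $N^c$. The variables of $body(\varsigma(\rho)^c)=\varsigma(body(\rho))^c$ are exactly the chosen representatives; let $g$ agree with $h$ on them and fix constants. The crucial claim is that $g$ is a homomorphism of $body(\varsigma(\rho)^c)$ into $N^c$: for each body atom $\alpha$ of $\rho$ one has $\rp(g(\varsigma(\alpha)^c))=g(\varsigma(\alpha))=h(\alpha)\in\rp(N^c)$, and one argues that $g(\varsigma(\alpha)^c)$ is exactly the atom of $N^c$ unpacking to $h(\alpha)$, which comes down to matching the canonical predicate of $\varsigma(\alpha)^c$ with the one carried by that atom of $N^c$. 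Granting this, $N^c\models\varsigma(\rho)^c$ extends $g$ to some $g'$ with $g'(head(\varsigma(\rho)^c))\in N^c$; and since $head(\varsigma(\rho)^c)=\varsigma(head(\rho))^c$, unpacking gives $g'(\varsigma(head(\rho)))=\rp\bigl(g'(\varsigma(head(\rho))^c)\bigr)\in\rp(N^c)$. Putting $h'(X)=h(X)$ for $X\in\UV(\rho)$ and $h'(Z)=g'(Z)$ for $Z\in\EV(\rho)$, a short case analysis on the occurrences in $head(\rho)$ (a constant; a universal variable with $h(X)$ a constant; a universal variable with $h(X)$ a null; an existential variable) --- using $\varsigma(Z)=Z$ and $g'(\varsigma(X))=h(X)$ --- shows $h'(head(\rho))=g'(\varsigma(head(\rho)))\in\rp(N^c)$ and $h'\supseteq h|_{\UV(\rho)}$, which is exactly what satisfaction of $\rho$ requires.

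The hard part, I expect, is precisely the transfer of the match --- the crucial claim above. The unpacking function is not injective on atoms (it forgets which coincidences among term positions were \emph{intended} by the canonical predicate), so showing that the lifted map $g$ lands in $N^c$ calls for a careful analysis of which canonical predicate symbols an atom of $\rp(N^c)$ forces to occur in $N^c$, and of how this interacts with repeated universal variables in $body(\rho)$ and with the shape information encoded in the canonical predicates. Everything else --- the safety of $\varsigma$, the renaming-invariance that yields $N^c\models\varsigma(\rho)^c$, and the unpacking identities --- is routine bookkeeping.
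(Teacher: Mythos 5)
Your overall strategy is the natural one, and you are right to isolate the transfer of the body match from $\rp(N^c)$ back to $N^c$ as the crux; but you leave exactly that step unproven, and it is not routine bookkeeping --- it fails. (The paper states this proposition without any proof, so there is nothing to compare against; the obstruction you sense is real.) Take $\Sigma=\{s(X_1),t(Y_1)\rightarrow p(X_1,Y_1);\ p(X_2,X_2)\rightarrow r(X_2)\}$ and $D=\{q(c)\}$. The only canonical bodies obtainable from $p(X_2,X_2)$ under safe substitutions are $p_{[1,1]}(X_2)$ and $p_{[c,c]}$, since $\varsigma(p(X_2,X_2))$ always has a repeated term; no rule of $\Sigma^c$ has a body atom over the predicate $p_{[1,2]}$ (which nevertheless occurs in $\Sigma^c$, in the head of the canonicalization of the first rule). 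Hence $M^c=\{q_{[c]},\,p_{[1,2]}(n,n)\}$, with $n$ a null, contains $D^c$ and satisfies every rule of $\Sigma^c$ vacuously, so $M^c\in\mods(D^c,\Sigma^c)$; its only term is the null $n$, so $N^c=M^c$ is a \smooth{} model of itself. Yet $\rp(N^c)=\{q(c),\,p(n,n)\}$ violates $p(X_2,X_2)\rightarrow r(X_2)$. In your notation: $h=\{X_2\mapsto n\}$ matches the body in $\rp(N^c)$, your $\varsigma$ is the identity, but $g(\varsigma(p(X_2,X_2))^c)=p_{[1,1]}(n)\notin N^c$, and no alternative safe substitution can produce the predicate $p_{[1,2]}$ carried by the witnessing atom of $N^c$.

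So the ``crucial claim'' is false for arbitrary models: the subscript of a canonical predicate only declares an \emph{intended} equality pattern among term positions, and a model of $D^c\cup\Sigma^c$ may contain atoms such as $p_{[1,2]}(n,n)$ whose actual terms coincide although the subscript says they do not; unpacking then creates joins that the canonical rules, whose atoms are all simple, cannot detect. Smoothness cures only the dual pathology (a constant sitting in a placeholder position, the situation of the paper's own remark); it does nothing about coincidences among null terms. A correct argument must either restrict to models whose atoms are \emph{coherent} (term equalities exactly as declared by the subscripts) and show that such a model can always be extracted from $M^c$ without losing finiteness or query (non)satisfaction, or close $N^c$ under the collapsed variants of its degenerate atoms before unpacking; neither you nor the paper does this, and as literally stated the proposition itself is refuted by the example above. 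A small separate slip: $\rp(N^c)$ is \emph{not} constant-free --- unpacking reintroduces the constants recorded in the subscripts --- although your later handling of constants is consistent with this.
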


By exploiting the above proposition, we can now prove that a UBCQ $q$ is satisfied by all finite models of a theory $D\cup\Sigma$ if, and only if, each finite model of the canonical rewriting of the theory $D^c\cup\Sigma^c$ satisfies the canonical rewriting of the UBCQ $q^c$.

\begin{theorem}\label{thm:fin-to-fin}
	$D\cup\Sigma\models_\mathsf{fin} q$ if, and only if, $D^c\cup\Sigma^c\models_\mathsf{fin} q^c$.
\end{theorem}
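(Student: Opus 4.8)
The plan is to prove the two implications separately, both by contraposition, on top of the unpacking apparatus of Propositions~\ref{th:inf-to-inf} and~\ref{th:fin-to-fin}. The common preliminary is a \emph{query-unpacking lemma}: for every instance $I^c$ whose atoms are canonical atoms (equivalently, over the signature of $D^c\cup\Sigma^c$) one has $I^c\models q^c$ if and only if $\rp(I^c)\models q$. The ``if'' direction holds because $\rp$ acts atom-wise, so a homomorphism witnessing $I^c\models q^c$ transports through $\rp$ into a homomorphism witnessing $\rp(I^c)\models\rp(q^c)$, after which $\rp(q^c)\equiv q$ (Proposition~\ref{th:inf-to-inf}) closes the argument; the ``only if'' direction is exactly why $q^c$ gathers \emph{all} canonical substitutions of each disjunct: given a homomorphism $h$ from a disjunct $\psi_j$ of $q$ into $\rp(I^c)$, one reads off a canonical substitution $\varsigma$ that fixes constants and sends distinct nulls in the image of $h$ to distinct fresh variables, so that the disjunct $\varsigma(\psi_j)^c$ of $q^c$ (or one isomorphic to it, hence equivalent) maps into $I^c$. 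I would also record the harmless normalisation that any finite model of $D\cup\Sigma$ refuting $q$ may be assumed to use only constants from $\const(D\cup\Sigma\cup q)$, by replacing the others with fresh nulls --- this preserves being a model, finiteness, and the refutation of $q$.

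For $D\cup\Sigma\fmodels q\;\Rightarrow\;D^c\cup\Sigma^c\fmodels q^c$ I argue contrapositively. Let $M^c$ be a finite model of $D^c\cup\Sigma^c$ with $M^c\not\models q^c$, and let $N^c$ be a \emph{smooth} instance of $M^c$: as already observed, $N^c$ is a finite, constant-free model of $D^c\cup\Sigma^c$, and $N^c\not\models q^c$ still holds since $q^c$ is constant-free and $N^c$ arises from $M^c$ by a bijection sending constants to fresh nulls. By Proposition~\ref{th:fin-to-fin}, $\rp(N^c)$ is a model of $D\cup\Sigma$, finite because $\rp$ is atom-wise; and by the query-unpacking lemma $\rp(N^c)\not\models q$, so $\rp(N^c)$ witnesses $D\cup\Sigma\not\fmodels q$.

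For the converse, $D^c\cup\Sigma^c\fmodels q^c\;\Rightarrow\;D\cup\Sigma\fmodels q$, I again argue contrapositively: from a finite model $M$ of $D\cup\Sigma$ with $M\not\models q$ (normalised as above) I build a \emph{finite} model $M^c$ of $D^c\cup\Sigma^c$ together with a homomorphism $\lambda\colon\rp(M^c)\to M$; then $M^c\models q^c$ would give $\rp(M^c)\models q$ by the query-unpacking lemma and $M\models q$ by $\lambda$, a contradiction, so $M^c\not\models q^c$ and $D^c\cup\Sigma^c\not\fmodels q^c$. The naive choice --- pack $M$, or fold $\mathit{chase}(D^c,\Sigma^c)$ onto a finite set of nulls --- fails, because collapsing nulls can merge the distinct nulls that the oblivious chase assigns to different existential triggers and thereby create \emph{spurious joins} that fire body-joined rules of $\Sigma^c$ whose heads were never generated. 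Instead I would obtain $M^c$ by an \emph{$M$-guided restricted chase} of $D^c$ under $\Sigma^c$: run a fair chase while maintaining a homomorphism $\kappa$ from the unpacking of the current instance into $M$ (starting from the inclusion $D\hookrightarrow M$), and whenever a rule $\varsigma(\rho)^c$ fires with a trigger $g$ that would introduce fresh nulls for its existential positions, transport $g$ through $\rp$ to a trigger of $\varsigma(\rho)$ into the current unpacking, compose with $\kappa$, invoke $M\models\Sigma$ (hence $M\models\varsigma(\rho)$) to obtain an $M$-witness $w$, and reuse for each existential position the term $w(\cdot)$ itself when it is a null of $M$ and a fixed fresh null $\widehat{n}_c$ when $w(\cdot)$ is a constant $c$. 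This keeps $\kappa$ a homomorphism after every step, confines all nulls to the finite pool $\mathrm{nulls}(M)\cup\{\widehat{n}_c\}$ (so the chase terminates, hence reaches a model of $\Sigma^c$ containing $D^c$), and leaves $\kappa$ as the desired $\lambda$.

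The step I expect to be the main obstacle is precisely this construction and its verification --- in effect, the ``original-to-canonical'' half of the semantic correspondence between $\mathit{mods}(D,\Sigma)$ and $\mathit{mods}(D^c,\Sigma^c)$, dual to Proposition~\ref{th:fin-to-fin}. One has to check, in a single induction on the chase, that the transported trigger is a legitimate trigger of $\varsigma(\rho)$ (so the $M$-witness exists), that reusing pool nulls yields a head atom of exactly the predicate-subscript demanded by $head(\varsigma(\rho)^c)$, that the $\kappa$-invariant survives every kind of chase step (datalog-style heads and existential heads alike), and that fairness plus the finiteness of the term pool force termination at a genuine model of $D^c\cup\Sigma^c$. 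Each of these is routine in isolation, but they are intertwined; the reason they go through at all is the design of the canonical rewriting --- predicate subscripts record the ``shape'' of atoms and constant-propagation is kept separate from null-creation --- so that the shape of the witness $M$ provides always matches what the rules of $\Sigma^c$ ask for.
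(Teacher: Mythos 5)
Your proposal is correct in substance, and its first half is essentially the paper's argument in contrapositive form: the paper also passes from a finite $M^c$ to a smooth model $N^c$, unpacks it via Proposition~\ref{th:fin-to-fin}, and lifts a homomorphism witnessing $\rp(N^c)\models q$ back through $f^{-1}$ to $M^c$; your ``query-unpacking lemma'' is exactly the inline step where the paper observes that the chosen preimage $B\subseteq N^c$ of $h(\psi_j)$ is isomorphic to some disjunct of $q^c$. Where you genuinely diverge is the converse direction. The paper simply asserts that every finite model $M$ of $D\cup\Sigma$ admits a finite $M^c\in\fmods(D^c,\Sigma^c)$ with $\rp(M^c)=M$ and then pushes a $q^c$-homomorphism through $\rp$; the intended construction is the \emph{exhaustive} packing $M^c=\{\beta:\rp(\beta)\in M\}$ (all canonical representatives of every atom of $M$, over the terms of $M$), which is a model precisely because $M\models\rho$ entails $M\models\varsigma(\rho)$ for every safe $\varsigma$, so whatever head the fired canonical rule demands already has a representative in $M^c$. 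Your dismissal of ``pack $M$'' is therefore too hasty: the spurious-join worry is real for packing the chase, and for a one-representative-per-atom packing (e.g.\ $\{p_{[c]},f_{[c,c]}\}$ fails to satisfy $p_{[c]}\rightarrow\exists Y f_{[1,c]}(Y)$ even though $\{p(c),f(c,c)\}$ is a model), but not for the exhaustive packing of a genuine model, which is closed under all rules. Your $M$-guided restricted chase with a finite term pool $\mathrm{nulls}(M)\cup\{\widehat{n}_c\}$ is nevertheless a valid alternative: it buys you the weaker and more flexible conclusion that only a homomorphism $\rp(M^c)\to M$ is needed rather than $\rp(M^c)=M$, at the price of the termination-plus-invariant induction you rightly identify as the main labour. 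Both routes close the argument; the paper's is shorter once the exhaustive packing is spelled out, yours is more self-contained about \emph{why} a finite canonical model exists at all.
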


\beforeProof

\begin{proof}
Assume that $D\cup\Sigma \fmodels q$. 
Then, for each finite model $M$ of $D\cup\Sigma$, there exists a homomorphism $h$ from at least one disjunct of $q$, say $\psi_j(\Y_j)$ to $M$.
Now, let $M^c$ be a finite model of $D^c\cup\Sigma^c$.
By Proposition~\ref{th:fin-to-fin}, there exist a (finite) {\smooth} model $N^c\in\mods(D^c,\Sigma^c)$ of $M^c$ and a bijective map
$f$ from $terms(M^c)$ to $terms(N^c)$ such that 
$f(n)=n$ for each null $n\in\terms(M^c)$;
$f(c)=n_c$ for each constant $c\in\terms(M^c)$, where $n_c$ is a fresh null;
$f(M^c)=N^c$, and $\rp(N^c)\in\mods(D,\Sigma)$.
Hence, by assumption, there exists a homomorphism $h$ from some $\psi_j(\Y_j)$ to $\rp(N^c)$.
Let $A=h(\psi_j(\Y_j))\subseteq \rp(N^c)$.
Then, for each atom $\alpha\in A$, we can choose an arbitrary atom $\beta\in N^c$ such that $\rp(\beta)=\alpha$. 
Let $B$ such a subset of $N^c$.
Therefore, by construction, there exists a BCQ in $q^c$ isomorphic to $B$.
In particular, there exists a homomorphism $h$ from $q^c$ to $N^c$.
In conclusion, $f^{-1}\circ h$ is a homomorphism from $q^c$ to $M^c$. 
Indeed, $f^{-1}\circ h$ is a map from $terms(q^c)$ to $\terms(M^c)$ such that
$f^{-1}(h(q^c))\subseteq f^{-1}(N^c)=M^c$.
Now, assume that $D^c\cup\Sigma^c\fmodels q^c$.
Let $M$ be a finite model of $D\cup\Sigma$.
By definition of canonical rules, can be easily proved that
there exists a finite model $M^c\in\fmods(D^c,\Sigma^c)$ such that $\rp(M^c)=M$.
Hence, let $h$ be a homomorphism from $q^c$ to $M^c$.
So that, $h(\varsigma_j(\psi_j(\Y_j))^c)\subseteq M^c$, for some disjunct $\varsigma_j(\psi_j(\Y_j))^c$ of $q^c$.
Therefore, by applying the unpacked function, we have that
$\rp(h(\varsigma_j(\psi_j(\Y_j))^c))=h(\rp(\varsigma_j(\psi_j(\Y_j))^c))=
h(\varsigma_j(\psi_j(\Y_j))\subseteq \rp(M^c)=M$.
Hence, $h$ is also a homomorphism from $q$ to $M$.
\end{proof}

\beforeSubsec

\subsection{Immediate consequences}

\rem{By exploiting the properties of the canonical rewriting, one can immediately observe 
	(actually reprove) that $\mathsf{linear}$ is finitely controllable since
	$\mathsf{inclusion}$-$\mathsf{dependencies}$ is finitely controllable.
	In fact, given a $\mathsf{linear}$ ontology $\Sigma$,
	its canonical rewriting $\Sigma^c$ belongs to $\mathsf{inclusion}$-$\mathsf{dependencies}$.	
	However, it is also possible to prove (now for the first time)
	that $\mathsf{sticky}$-$\mathsf{join}$ is finitely controllable since
	$\mathsf{sticky}$ is finitely controllable~\cite{DBLP:conf/lics/GogaczM13}.
	Actually, $\mathsf{sticky}$-$\mathsf{join}$ extends both $\mathsf{sticky}$
and $\mathsf{linear}$ by admitting non simple atoms even when the 
stickyness property is violated~\cite{DBLP:journals/ai/CaliGP12}. 
And one can show that after applying the canonical rewriting
to a $\mathsf{sticky}$-$\mathsf{join}$ ontology what we obtain is
a $\mathsf{sticky}$ ontology. The following results follows.}%
\new{
By exploiting the properties of the canonical rewriting, one can 
reprove that $\mathsf{linear}$ is finitely controllable,
and prove (for the first time)
that also $\mathsf{sticky}$-$\mathsf{join}$ enjoys this property.
In fact, given a $\mathsf{linear}$ or $\mathsf{sticky}$-$\mathsf{join}$
ontology $\Sigma$, its canonical rewriting $\Sigma^c$ belongs to $\mathsf{inclusion}$-$\mathsf{dependencies}$ or $\mathsf{sticky}$, respectively.
In the former case, it suffices to observe that 
any variable occurring multiple times in some atom $\alpha$, by definition, 
occurs exactly once in its associated canonical atom $\alpha^c$.
In the latter case, additionally, 
consider a variable $X$ violating the sticky property since it is 
marked and it occurs multiple times in the body of some rule $\rho$. 
By hypothesis,
$X$ may occur in exactly one atom of $body(\rho)$. However, 
even if marked, $X$ now occurs exactly once in its canonical atom and 
it cannot violate the sticky property any more.
The following result follows.}

\begin{theorem}\label{th:stickyjoin}	
	Under $\mathsf{sticky}$-$\mathsf{join}$ ontologies, $D \cup\Sigma \models q$ if, and only if,
	$D\cup\Sigma\fmodels q$.
\end{theorem}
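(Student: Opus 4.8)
The plan is to reduce finite controllability of $\mathsf{sticky}$-$\mathsf{join}$ to finite controllability of $\mathsf{sticky}$ via the canonical rewriting developed in this section, following exactly the diagram in Figure~\ref{fig:chain1}. Concretely, fix a triple $\langle D,\Sigma,q\rangle$ with $\Sigma\in\mathsf{sticky}$-$\mathsf{join}$ and assume $D\cup\Sigma\not\models q$; I want to produce a finite model of $D\cup\Sigma$ refuting $q$. By Theorem~\ref{thm:inf-to-inf}, $D\cup\Sigma\not\models q$ is equivalent to $D^c\cup\Sigma^c\not\models q^c$. If I can show $\Sigma^c\in\mathsf{sticky}$, then the known result that $\mathsf{sticky}$ is finitely controllable~\cite{DBLP:conf/lics/GogaczM13} gives a finite model $M^c$ of $D^c\cup\Sigma^c$ with $M^c\not\models q^c$, i.e.\ $D^c\cup\Sigma^c\not\models_{\mathsf{fin}} q^c$. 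Finally Theorem~\ref{thm:fin-to-fin} transports this back to $D\cup\Sigma\not\models_{\mathsf{fin}} q$, which is what we need (the converse direction is trivial). So the whole argument collapses to one structural claim: \emph{the canonical rewriting of a $\mathsf{sticky}$-$\mathsf{join}$ ontology is $\mathsf{sticky}$.}

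For that claim I would argue at the level of individual rules, since stickiness is a per-rule condition on marked variables. Recall that $\mathsf{sticky}$-$\mathsf{join}$ relaxes $\mathsf{sticky}$ by tolerating a marked variable that occurs multiple times, provided all those occurrences are confined to a \emph{single} body atom (plus some side conditions from~\cite{DBLP:journals/ai/CaliGP12}; but the single-atom confinement is the essential point). Now take any rule $\rho\in\Sigma$ and any canonical rule $\varsigma(\rho)^c\in\Sigma^c$. A variable $X$ that violates stickiness in $\varsigma(\rho)^c$ would have to be marked (in $\Sigma^c$) and occur at least twice in $body(\varsigma(\rho)^c)$. The key observation is the one already flagged in the paragraph preceding the theorem: by the very definition of the canonical atom, any variable occurring multiple times \emph{within a single atom} $\alpha$ occurs exactly once in $\alpha^c$ (the repetition is absorbed into the predicate subscript via clause $(b)$ $\ell_i=\ell_j$). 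Hence a $\mathsf{sticky}$-$\mathsf{join}$-tolerated variable, whose multiple occurrences all sat inside one body atom of $\rho$, now occurs at most once in each canonical body atom, and — since it was in a single body atom — at most once in all of $body(\varsigma(\rho)^c)$; it therefore cannot violate stickiness. Genuinely sticky-violating variables (marked and spread across two body atoms) do not exist in $\rho$ by the $\mathsf{sticky}$-$\mathsf{join}$ hypothesis, and the safe substitution $\varsigma$ only identifies variables or replaces them by constants, which cannot turn a non-marked variable into one spread across two atoms. So $\varsigma(\rho)^c$ satisfies the plain sticky condition.

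The one subtlety I would be careful about, and which I expect to be the main obstacle, is the behaviour of \emph{markedness} under the rewriting: markedness is a global, least-fixed-point notion over the whole ontology (clauses $(i)$ and $(ii)$ in the definition), so I cannot reason about $\Sigma^c$ rule by rule when deciding which variables are marked. I would handle this by relating marked positions of $\Sigma$ to marked positions of $\Sigma^c$ through the correspondence between a position $p[i]$ of $\Sigma$ and the family of positions of the canonical predicates $p_{[\ell_1,\dots,\ell_m]}$ in $\Sigma^c$ whose subscript has $\ell_i$ a placeholder (not a constant). The claim to establish is that a variable occurrence at a canonical position is marked in $\Sigma^c$ only if the corresponding occurrence in $\Sigma$ was marked in $\Sigma$ — intuitively, the rewriting never \emph{creates} propagation of markedness, it only possibly prunes it (constant positions disappear from the canonical arity, and same-atom repetitions are collapsed). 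Once that monotonicity-of-markedness lemma is in place, the per-rule argument above goes through, $\Sigma^c\in\mathsf{sticky}$ follows, and the theorem is proved by chaining Theorems~\ref{thm:inf-to-inf} and~\ref{thm:fin-to-fin} with~\cite{DBLP:conf/lics/GogaczM13} as indicated.
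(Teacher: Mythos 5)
Your proposal follows exactly the paper's own route: it reduces to finite controllability of $\mathsf{sticky}$ via the canonical rewriting, chaining Theorems~\ref{thm:inf-to-inf} and~\ref{thm:fin-to-fin} with the result of \citeN{DBLP:conf/lics/GogaczM13}, and rests on the same structural claim that $\Sigma^c\in\mathsf{sticky}$ because a marked variable occurring multiple times in a single body atom of a $\mathsf{sticky}$-$\mathsf{join}$ rule collapses to a single occurrence in the canonical atom. Your extra care about how markedness transfers from $\Sigma$ to $\Sigma^c$ is a point the paper's terse justification passes over silently, but it refines rather than changes the argument.
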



\beforeSec
\section{Finite controllability of Shy ontologies}\label{sec:challenge}

\begin{figure}[t!]
	\begin{center}
		\begin{tikzpicture}
		[->,shorten >=1pt,auto,thick,main node/.style={font=\sffamily\bfseries}]
		
		\node[main node] (1) {$D^c\cup\Sigma^c\models q^c$};
		\node[main node] (2) [right of=1,node distance=6.5cm] 
		{$D^c\cup\Sigma^c\fmodels q^c$};
		\node[main node] (3) [right of=2,node distance=4cm] 
		{$D^c\cup \Sigma^c \wsfmodels q^c$};
		\node[main node] (4) [below of=2,node distance=1.4cm] 
		{$D^c\cup\Sigma^c_a\fmodels q^c$};
		\node[main node] (6) [below of=3,node distance=1.4cm] {$D^c\cup\Sigma^c_a\wsfmodels q^c$};
		\node[main node] (5) [left of=4,node distance=6.5cm] {$D^c\cup\Sigma^c_a\models q^c$};		
		
		\draw (2) to[right] node [below] {\scriptsize Lemma. \ref{lm:main}} (1);
		\draw[dashed] (2) to node [below] {\scriptsize Th.~\ref{th:fsn-to-f}} (3);
		\draw[dashed] (6) to[left] node [above] {\scriptsize Th.~\ref{th:fsn-to-f} } (4);
		\draw[dashed] (4) to[left] node [above] {\scriptsize 
			\citeN{DBLP:conf/lics/GogaczM13}} (5);		
		\draw[dashed] (5) to node [right] {\scriptsize Th.~\ref{thm:activeCanonicalGeneral}} (1);
		\draw[dashed] (3) to node [left] {\scriptsize Th.~\ref{th:f-to-fsn}} (6);		
		\end{tikzpicture}
	\end{center}
	\beforeCaption
	\caption{Chain of implications for the proof of Lemma \ref{lm:main}.}\label{fig:chain}
	\afterCaption
\end{figure}
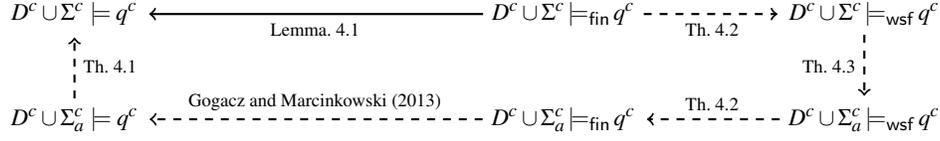

We open this section by observing that, differently from $\mathsf{linear}$ and 
$\mathsf{sticky}$-$\mathsf{join}$, 
the canonical rewriting of a $\mathsf{shy}$ ontology ---although it is still a $\mathsf{shy}$ ontology--- does not fall in any other known class.
To prove that $\shy$ is finitely controllable,
we therefore devise three technical tools on top of the
canonical rewriting defined in Section~\ref{sec:tools}.
These tools allow us to show that $D^c \cup\Sigma^c \fmodels q^c$ if, and only if,
$D^c\cup\Sigma^c\models q^c$ (Lemma~\ref{lm:main}).
%
%
To this end, let us fix a triple 
$\langle D,\Sigma,q \rangle$, and 
the associated one $\langle D^c,\Sigma^c,q^c \rangle$ in canonical form.
Our tools are as follows:

{\em Active and harmless rules.}
Whenever $\Sigma$ is $\shy$, we can partition $\Sigma^c$ in two sets, denoted by $\Sigma^c_a$ and $\Sigma^c_h$ ---collecting 
{\em active} and {\em harmless} rules, respectively--- enjoying the following properties:
$(1)$ $\Sigma^c_h$ are the rules of $\Sigma^c$ with
at least a variable occurring in more than one body atom;
$(2)$ $\Sigma^c_a = \Sigma^c\setminus \Sigma^c_h$ is a $\mathsf{joinless}$ (and still $\shy$) ontology; and
%
%
$(3)$ $\mathit{chase}(D^c,\Sigma^c) = \mathit{chase}(D^c,\Sigma^c_a)$.

{\em Well-supported finite models.}
Inspired by well-supported interpretations
of general logic programs~\cite{DBLP:journals/ngc/Fages91},
we define {\em well-supported} finite models 
of $\langle D,\Sigma\rangle $, denoted by $\mathit{wsfmods}(D,\Sigma)$, 
which enjoy the following properties: 
$(1)$ for each $M \in \mathit{wsfmods}(D,\Sigma)$, 
there exists an {\em ordering} $(\alpha_1,\ldots,\alpha_m)$
of its atoms such that, for each $\alpha_j$ of $M$,
either $\alpha_j$ belongs to $D$, or 
there exist a rule $\rho\in\Sigma$ and a homomorphism from the atoms of $\rho$ to $\{\alpha_1,\ldots,\alpha_{j}\}$
that maps $body(\rho)$ to $\{\alpha_1,\ldots,\alpha_{j-1}\}$
and $head(\rho)$ to $\{\alpha_j\}$;
$(2)$ for each $M \in \mathit{fmods}(D,\Sigma)$, there exists
a well-supported finite model $M'\subseteq M$; and
$(3)$ each minimal finite model of $D\cup\Sigma$ is a well-supported finite model.

{\em Propagation ordering.}
Since $\mathit{mods}(D^c,\Sigma^c) \subseteq \mathit{mods}(D^c,\Sigma^c_a)$,
in general it is definitely possible that a model $M$ of \mbox{$D^c \cup \Sigma^c_a$} 
is not a model of $D^c \cup \Sigma^c$.
In case $\Sigma$ is $\shy$ and $M$ is a well-supported finite models of $D^c \cup \Sigma^c_a$,
by exploiting an arbitrary ordering of $M$, we show how to rename and propagate some of 
the terms of $M$ to construct an instance $M'$ enjoying the
following property:
$(1)$ $M'\in \mathit{wsfmods}(D^c,\Sigma^c)$; and
$(2)$ there exists a homomorphism from $M'$ to $M$.

With these tools in place, we can now apply the technique shown in Figure~\ref{fig:chain},
where we use the symbol $\wsfmodels$ to refer the satisfiability of the query
under well-supported finite models only.


\beforeSubsec

\subsection{Active and harmless rules}\label{sec:active}

As said, and next stated, the canonical rewriting of a 
$\mathsf{shy}$ ontology is again a $\mathsf{shy}$ ontology.

\begin{proposition}\label{th:can-rew-of-shy-is-shy}
	If $\Sigma$ is $\shy$, then $\Sigma^c$ is. 
\end{proposition}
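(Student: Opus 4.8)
The plan is to check, rule by rule, that each canonical rule $\varsigma(\rho)^c$ of $\Sigma^c$ satisfies the two defining conditions of $\shy$, by tracing any hypothetical violation inside $\Sigma^c$ back to a violation inside $\Sigma$; the bridge will be a correspondence between invasions in $\Sigma^c$ and invasions in $\Sigma$. To set this up, to a canonical position $p_{[\ell_1,\ldots,\ell_m]}[k]$ of $\Sigma^c$ I would associate the set $\pi(p_{[\ell_1,\ldots,\ell_m]}[k])=\{\,p[i] : \ell_i=k\,\}$ of positions over the original signature (the position-level analogue of the unpacking function $\rp$). Two elementary observations are used throughout: (a) forming a canonical atom only merges repeated occurrences of one and the same variable, it never deletes a variable and never puts a constant among the explicit terms; and (b) a safe substitution $\varsigma$ fixes every existential variable and maps every universal variable into $\const(D\cup\Sigma)\cup\UV(\rho)$. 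From (a) and (b), $\EV(\varsigma(\rho)^c)=\EV(\rho)$, so the existential variables of $\Sigma^c$ are copies of those of $\Sigma$.

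The core step is a transfer lemma, proved by induction on the inductive definition of \emph{invaded}: if $p_{[\vec\ell]}[k]$ is invaded by an existential variable $Z$ in $\Sigma^c$, then every original position in $\pi(p_{[\vec\ell]}[k])$ is invaded by $Z$ in $\Sigma$. In the base case $Z$ occurs in $head(\varsigma(\rho)^c)$ at $p_{[\vec\ell]}[k]$; since $\varsigma(Z)=Z$ and no other variable can be mapped to the existential $Z$, the variable $Z$ occurs in $head(\rho)$ at precisely the positions $p[i]$ with $\ell_i=k$, which are therefore invaded by $Z$ in $\Sigma$. In the inductive case some universal variable $Y$ of $\varsigma(\rho)^c$ is attacked by $Z$ in $\Sigma^c$ and occurs in $head(\varsigma(\rho)^c)$ at $p_{[\vec\ell]}[k]$: every occurrence of $Y$ in $body(\varsigma(\rho)^c)$ descends, through $\varsigma$ and $(\cdot)^c$, from occurrences of universal variables of $\rho$ whose $\pi$-images are, by the induction hypothesis, invaded by $Z$; hence every universal variable $w$ of $\rho$ with $\varsigma(w)=Y$ occurs in $body(\rho)$ only at positions invaded by $Z$, i.e.\ is attacked by $Z$ in $\Sigma$; and the head occurrence of $Y$ comes from such a $w$ occurring in $head(\rho)$ at each $p[i]$ with $\ell_i=k$, so clause (ii) of the definition of \emph{invaded} gives that $p[i]$ is invaded by $Z$ in $\Sigma$. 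As an immediate corollary: if a universal variable $X$ of $\varsigma(\rho)^c$ is attacked by $Z$ in $\Sigma^c$, then every universal variable of $\rho$ mapped by $\varsigma$ to $X$ is attacked by $Z$ in $\Sigma$, hence is not protected in $\Sigma$.

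With this in hand I would verify the two $\shy$ conditions for $\varsigma(\rho)^c$. For condition (1), suppose a variable $X$ occurs in two distinct atoms of $body(\varsigma(\rho)^c)$; these descend from two distinct atoms of $body(\rho)$ carrying universal variables $w_1,w_2$ of $\rho$ (possibly $w_1=w_2$) with $\varsigma(w_1)=\varsigma(w_2)=X$. If $X$ were attacked by some $Z$ in $\Sigma^c$, the corollary makes both $w_1$ and $w_2$ attacked by $Z$ in $\Sigma$; when $w_1=w_2$, or when one of $w_1,w_2$ already occurs in two body atoms of $\rho$, condition (1) for $\rho$ forces that variable to be protected, a contradiction; in the remaining case the contradiction is routed through condition (2) for $\rho$ applied to the head occurrences of $w_1$ and $w_2$. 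Condition (2) for $\varsigma(\rho)^c$ is handled in the same spirit: two variables of $\varsigma(\rho)^c$ occurring in its head and in two distinct body atoms and attacked by a common existential are traced back, through $\varsigma$, to variables of $\rho$ occurring in its head and in two distinct body atoms and attacked by a common existential of $\Sigma$, contradicting condition (2) for $\rho$.

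The step I expect to be the main obstacle is the residual case of condition (1): a safe substitution collapses two \emph{distinct} universal variables $w_1\neq w_2$ lying in \emph{different} body atoms of $\rho$ into one variable $X$ of $\varsigma(\rho)^c$, with neither $w_1$ nor $w_2$ a join variable of $\rho$. Then condition (1) for $\rho$ says nothing directly about $w_1,w_2$, and the argument must exploit precisely which occurrences of $w_1$ and $w_2$ the two canonical body atoms of $X$ descend from, together with the fact that the chosen maximal set of safe substitutions does not identify universal variables gratuitously, in order to route the contradiction through condition (2) of $\rho$. Keeping this bookkeeping straight while tracking the position map $\pi$ through the interaction of $\varsigma$ with the merging carried out by $(\cdot)^c$ is the delicate part of the proof.
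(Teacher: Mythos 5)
Your transfer lemma (a canonical position invaded by $Z$ in $\Sigma^c$ unpacks, via the position map $\pi$, to original positions invaded by $Z$ in $\Sigma$, proved by induction on the definition of \emph{invaded}) is sound, and the corollary on attacks follows. For the canonical rules obtained from \emph{injective} safe substitutions this already reproduces, in a more rigorous form, the whole of the paper's argument: the paper only considers the subset $\bar{\Sigma}^c$ of rules $\varsigma(\rho)^c$ in which $\varsigma$ sends each variable of $\rho$ to a distinct fresh one, so that $\rp(\varsigma(\rho)^c)=\rho$, and then transfers a hypothetical violation back to $\rho$ ``by construction''; your $\pi$ and induction are exactly the missing bookkeeping for that step.

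The genuine gap is the residual case you flag at the end, and it is not merely delicate --- the route you propose for it fails. Take $\Sigma=\{\,t(U)\rightarrow\exists Z\,p(Z);\ p(V)\rightarrow r(V);\ p(W_1),r(W_2)\rightarrow s(W_1)\,\}$. This is $\shy$: no variable occurs in two body atoms of any rule, and condition $(2)$ is vacuous for the third rule because $W_2$ does not occur in its head. The safe substitution with $W_2\mapsto W_1$ yields the canonical rule $p_{[1]}(W_1),r_{[1]}(W_1)\rightarrow s_{[1]}(W_1)$, which is not isomorphic to the uncollapsed variant and therefore survives in the maximal set of safe substitutions; in $\Sigma^c$ both positions $p_{[1]}[1]$ and $r_{[1]}[1]$ are invaded by $Z$, so the join variable $W_1$ of this rule is attacked and unprotected. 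Your plan to ``route the contradiction through condition $(2)$ for $\rho$'' requires both collapsed preimages $w_1,w_2$ to occur in $head(\rho)$, which is false here, and your appeal to the maximal set of safe substitutions not identifying variables ``gratuitously'' is unavailable, since maximality modulo isomorphism does not exclude the collapsed rule. So the residual case cannot be closed by the means you describe; whatever the correct resolution (a restriction of the statement to the injectively renamed rules, or a genuinely new argument about collapsing substitutions), it needs an idea that is absent from your proposal --- and, for what it is worth, from the paper's own proof, which never leaves $\bar{\Sigma}^c$ and whose stated bridge (``if $\Sigma^c$ is shy then $\bar{\Sigma}^c$ is shy'') points in the wrong direction.
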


The goal of this section is therefore to identify a suitable subset of 
$\Sigma^c$ that falls in some known finitely-controllable class, 
and that roughly ``behaves'' as  $\Sigma^c$ under both finite and arbitrary models.
The idea is to collect in $\Sigma^c_h$ the rules of $\Sigma^c$ with
at least a variable occurring in more than one body atom, and 
to define $\Sigma^c_a = \Sigma^c\setminus \Sigma^c_h$.
In other words, $\Sigma^c_a$ is exactly the maximal subset of $\Sigma^c$
that belongs to $\mathsf{joinless}$.
%
%
%
%
Let us now provide some insights 
regarding this way of partitioning $\Sigma^c$.
From the database $D=\{p(c)\}$ and 
the $\shy$ ontology $\Sigma=$
$\{p(X) \rightarrow \exists Y \, f(Y,X);$
$f(X,Y), p(X) \rightarrow p(Y)\}$
we first construct $D^c = \{p_{[c]}\}$ 
and $\Sigma^c$ as the following set of rules:

\befAftVectors

{\small
	\[\begin{array}{rclrclrclr}
	f_{[1,c]}(X), p_{[1]}(X) & \rightarrow & p_{[c]} & \ \ \ \ \ 
	f_{[c,c]}, p_{[c]} & \rightarrow & p_{[c]} & \ \ \ \ \ 
	\evid{p_{[c]}} & \evid{\rightarrow} & \evid{\exists Y f_{[1,c]}(Y)} & \\
	f_{[1,1]}(X), p_{[1]}(X) & \rightarrow & p_{[1]}(X) & \ \ \ \ \ 
	f_{[c,1]}(Y), p_{[c]} & \rightarrow & p_{[1]}(Y) & \ \ \ \ \ 
	p_{[1]}(X) & \rightarrow & \exists Y f_{[1,2]}(Y,X) & \\
	f_{[1,2]}(X,Y), p_{[1]}(X) & \rightarrow & p_{[1]}(Y) & \ \ \ \ \ 
	 &  &  & \ \ \ \ \ 
	 &  &  & 	
	\end{array}
	\]}

\befAftVectors

\noindent Again, red rules are those applied by the chase on $D^c \cup \Sigma^c$.
Now we observe that there is no way to trigger the rules in the first column:
although the chase does produce an atom $f(t,n_i)$ for some term $t$ and null $n_i$,
it never produces any atom $p(n_i)$. 
This fact is detected by the syntactic conditions underlying $\shy$ 
(marking $X$ in $f(X,Y), p(X) \rightarrow p(Y)$ as ``protected''),
which guarantee that $X$ may be mapped by the chase to constants only.
%
%
Hence, since by definition $\Sigma^c_a$ consists of the joinless rules in the last two columns,
it holds that $\mathit{chase}(D^c,\Sigma^c) = \mathit{chase}(D^c,\Sigma^c_a)$.





The reason underlying the fact that the chase never applies rules of $\Sigma^c_h$ 
will be exploited in Section~\ref{sec:propOrd} to prove Theorem~\ref{th:f-to-fsn} (see Figure~\ref{fig:chain}, right-hand side),
namely that $\Sigma^c_a$ roughly ``behaves'' as  $\Sigma^c$ under finite (well-supported) models.
Conversely, to show Theorem~\ref{thm:activeCanonicalGeneral} (see Figure~\ref{fig:chain}, left-hand side) it suffices 
to observe the more general property that $\Sigma^c_a \subseteq \Sigma^c$,
which immediately implies $\mathit{mods}(D^c,\Sigma^c) \subseteq \mathit{mods}(D^c,\Sigma^c_a)$.
And the next result follows.

\begin{theorem}\label{thm:activeCanonicalGeneral}
	If $D^c\cup\Sigma^c_a\models q^c$, then $D^c\cup\Sigma^c\models q^c$.
\end{theorem}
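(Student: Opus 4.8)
The plan is to reduce the statement to the elementary observation, already anticipated in the discussion above, that $\Sigma^c_a \subseteq \Sigma^c$ holds by construction, since $\Sigma^c_a = \Sigma^c \setminus \Sigma^c_h$. From this inclusion the containment of model sets follows immediately: if an instance $M$ satisfies every rule of $\Sigma^c$ and contains $D^c$, then in particular it satisfies every rule of the subset $\Sigma^c_a$ and still contains $D^c$, hence $\mathit{mods}(D^c,\Sigma^c) \subseteq \mathit{mods}(D^c,\Sigma^c_a)$. No property specific to $\shy$, to joinlessness, or to the canonical rewriting is needed here; the only ingredient is monotonicity of rule satisfaction with respect to enlarging the rule set.

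With this in hand the argument is a one-line entailment transfer. Assuming $D^c \cup \Sigma^c_a \models q^c$, I would take an arbitrary $M \in \mathit{mods}(D^c,\Sigma^c)$; by the inclusion just noted $M \in \mathit{mods}(D^c,\Sigma^c_a)$, so the hypothesis gives $M \models q^c$; since $M$ ranges over all models of $D^c \cup \Sigma^c$, we conclude $D^c \cup \Sigma^c \models q^c$. Put slogan-wise: a UBCQ entailed by a weaker theory is entailed by any stronger theory over the same database.

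There is essentially no obstacle in this direction. The genuinely substantive content of the partition $(\Sigma^c_a,\Sigma^c_h)$ — in particular the nontrivial equality $\mathit{chase}(D^c,\Sigma^c) = \mathit{chase}(D^c,\Sigma^c_a)$, which relies on the $\shy$ conditions forcing the join variables of harmless rules to be mapped to constants only — is precisely what is needed for the finite-model implications (Theorem~\ref{th:f-to-fsn} and the right-hand side of Figure~\ref{fig:chain}), where one must show that $\Sigma^c_a$ behaves like $\Sigma^c$ despite $\mathit{mods}(D^c,\Sigma^c_a) \supseteq \mathit{mods}(D^c,\Sigma^c)$ rather than the reverse. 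I would therefore keep the proof of this theorem to two or three sentences and defer the real work to the sections on well-supported models and the propagation ordering.
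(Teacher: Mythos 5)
Your proof is correct and is exactly the argument the paper gives: since $\Sigma^c_a \subseteq \Sigma^c$, one has $\mathit{mods}(D^c,\Sigma^c) \subseteq \mathit{mods}(D^c,\Sigma^c_a)$, and the entailment transfers by monotonicity. You are also right that the substantive content of the partition (e.g.\ the chase equality) is only needed for the converse, finite-model direction, not here.
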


\beforeSubsec

\subsection{Well-supported finite models}\label{sec:wsf}

We start by defining the notion of well-supported finite instances, 
which is inspired by the related notion of well-supported interpretations for general logic programs \cite{DBLP:journals/ngc/Fages91}.

%

Let $D$ be a database, and $\Sigma$ be an ontology.
A finite instance $I$ is called \textit{well-supported} 
w.r.t. the theory $D \cup \Sigma$ if there is an ordering $(\alpha_1,\ldots,\alpha_m)$ of its atoms
such that, for each $j \in \{1,\ldots,m\}$, at least one of the following conditions
is satisfied:
%
$(1)$ $\alpha_j$ is a database atom of $D$; and
$(2)$ there exist a rule $\rho$ of $\Sigma$ and a homomorphism $h$ from $atoms(\rho)$ to $\{\alpha_1,\ldots,\alpha_j\}$ such that $h(head(\rho))=\{\alpha_j\}$
    and  $h(body(\rho))\subseteq \{\alpha_1,\ldots,\alpha_{j-1}\}$.
In both cases, we will say that $\alpha_j$ is a \textit{well-supported atom} w.r.t. $(\alpha_1,\ldots,\alpha_m)$; while in the latter case we will also say that
$\rho$ is a \textit{well-supporting rule} for $\alpha_j$ w.r.t. $(\alpha_1,\ldots,\alpha_m)$.
Such an ordering will be called a \textit{well-supported ordering} of $I$.

We denote by $\mathit{wsfmods}(D,\Sigma) \subseteq \mathit{fmods}(D,\Sigma)$ the set of all well-supported finite models of $D \cup \Sigma$.
Moreover, if a UBCQ $q$ is satisfied by each model of $\mathit{wsfmods}(D,\Sigma)$, 
we write $D\cup\Sigma \models_{\mathsf{wsf}} q$.
Interestingly, each finite model of $D\cup\Sigma$ contains a well-supported finite model of the theory.

\begin{proposition}\label{th:wsfm-in-fm}
For each $M\in \mathit{fmods}(D,\Sigma)$, there exists $M'\subseteq M$
such that $M'\in \mathit{wsfmods}(D,\Sigma)$. In particular,
each minimal finite model of $D\cup\Sigma$ is a well-supported finite model.
\end{proposition}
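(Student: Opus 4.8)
The plan is to take an arbitrary finite model $M \in \mathit{fmods}(D,\Sigma)$ and to build a well-supported submodel $M' \subseteq M$ by a greedy ``closure from $D$'' construction, stopping at a fixpoint. First I would set $M'_0 = D$, which is a subset of $M$ since $M$ is a model of $D\cup\Sigma$; then, as long as $M'_i$ is not yet a model of $\Sigma$, there is some rule $\rho \in \Sigma$ and a homomorphism $h$ from $body(\rho)$ into $M'_i$ with no extension witnessing $head(\rho)$ inside $M'_i$. Since $M \models \Sigma$ and $M'_i \subseteq M$, the same $h$ extends to an $h' \supseteq h|_{\bf X}$ with $h'(head(\rho)) \in M$; I add that single atom $\alpha = h'(head(\rho))$ to obtain $M'_{i+1} = M'_i \cup \{\alpha\} \subseteq M$. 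Because $M$ is finite this process terminates, yielding $M' = M'_k \subseteq M$ with $M' \models \Sigma$ and $M' \supseteq D$, i.e. $M' \in \mathit{mods}(D,\Sigma)$, and it is finite, so $M' \in \mathit{fmods}(D,\Sigma)$.

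The ordering witnessing well-supportedness is exactly the order in which atoms entered $M'$: list first the atoms of $D$ (in any order), then $\alpha$ at step $1$, $\alpha$ at step $2$, and so on, discarding duplicates so that each atom appears once at the first step it was added. For an atom $\alpha_j$ that is not in $D$, it was introduced at some step as $h'(head(\rho))$ where $h = h'|_{\UV(\rho)}$ maps $body(\rho)$ into $M'_i = \{\alpha_1,\ldots,\alpha_{j-1}\}$; thus $h'$ is a homomorphism from $atoms(\rho)$ to $\{\alpha_1,\ldots,\alpha_j\}$ with $h'(body(\rho)) \subseteq \{\alpha_1,\ldots,\alpha_{j-1}\}$ and $h'(head(\rho)) = \{\alpha_j\}$, which is condition~$(2)$ of the definition. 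Atoms of $D$ satisfy condition~$(1)$. Hence $M' \in \mathit{wsfmods}(D,\Sigma)$, proving the first claim.

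For the ``in particular'' part: if $M$ is a \emph{minimal} finite model of $D\cup\Sigma$, then the submodel $M' \subseteq M$ just constructed is itself a finite model of $D \cup \Sigma$, so by minimality $M' = M$; therefore $M$ admits a well-supported ordering and $M \in \mathit{wsfmods}(D,\Sigma)$. I do not expect a serious obstacle here; the only point needing a little care is the bookkeeping that makes the ``order of insertion'' a genuine \emph{sequence of distinct atoms} compatible with the definition — in particular that when $\alpha_j$ is added, the homomorphism $h$ lands in the atoms strictly preceding it rather than merely in the atoms added by the same ``round'' of a breadth-first closure. Processing one rule application at a time (rather than a full one-step closure) makes this automatic, and the finiteness of $M$ guarantees termination; the argument is otherwise a routine induction.
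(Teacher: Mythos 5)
Your proof is correct, but it takes a genuinely different route from the paper's. You build the well-supported submodel \emph{bottom-up}: starting from $D$ and, as long as some rule instance is violated, adding a single head-witness atom that must exist in $M$ because $M$ is a model; the insertion order is then the well-supported ordering, and termination follows from finiteness of $M$. The paper instead argues \emph{top-down} on all of $M$: it starts from an arbitrary ordering of $M$'s atoms, repeatedly shifts the first non-well-supported atom to the end, observes that the sequence of orderings must eventually repeat (there are finitely many orderings), and then shows that the prefix $A$ of atoms that did get supported is itself a model, because any head-witness for a body matched inside $A$ cannot lie among the perpetually unsupported atoms. Your closure argument is the more elementary and standard of the two (it is essentially a chase run inside $M$, reusing $M$'s own atoms as witnesses), and it makes the existence of the well-supported ordering immediate from the construction; the paper's argument keeps a larger submodel (all atoms of $M$ that can eventually be supported, rather than only those needed to repair violations), though nothing in the statement requires that. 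One small point of care in your write-up: the extension $h'$ guaranteed by rule satisfaction agrees with $h$ only on the frontier variables ${\bf X}$, not on all of $\UV(\rho)$, so the homomorphism witnessing condition $(2)$ should be taken as $h \cup h'|_{\bf Z}$; this is exactly the bookkeeping you flag, and it goes through. The ``in particular'' part is handled identically in both proofs.
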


%

Although each finite model of an ontological theory contains a well-supported finite model of the theory, the reverse inclusion does not hold.
Consider for example the ontology $\Sigma$ of Section~\ref{sec:overview}, and
the model $M=D\cup \{f(c_1,c_1),f(c_2,c_1)\}$.
Since $(p(c_1),p(c_2),f(c_1,c_2),f(c_1,c_1),$ $f(c_2,c_1))$ is a well-supported ordering of $M$, then $M$ is well-supported.
However, $M\setminus \{f(c_2,c_1)\}$ is a model of $D\cup\Sigma$.
Therefore, $M$ is not a minimal one.
Using Proposition~\ref{th:wsfm-in-fm}, we can now prove that if a UBCQ $q$ can be satisfied by each well-supported finite model of a theory, then it can be satisfied by each finite model of the theory.

\begin{theorem}\label{th:fsn-to-f}
	$D \cup\Sigma \models_\mathsf{wsf} q$ if, and only if,
	$D \cup \Sigma \models_\mathsf{fin} q$.
\end{theorem}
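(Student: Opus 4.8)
The plan is to prove Theorem~\ref{th:fsn-to-f} by establishing the two directions separately, with the bulk of the work in the ``only if'' direction. The ``if'' direction ($D\cup\Sigma\models_\mathsf{fin} q$ implies $D\cup\Sigma\models_\mathsf{wsf} q$) is immediate: since $\mathit{wsfmods}(D,\Sigma)\subseteq\mathit{fmods}(D,\Sigma)$ by definition, any UBCQ satisfied by all finite models is in particular satisfied by all well-supported finite models. So the real content is showing that satisfaction over all well-supported finite models propagates to satisfaction over all finite models.

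For the ``only if'' direction, assume $D\cup\Sigma\models_\mathsf{wsf} q$ and let $M$ be an arbitrary finite model of $D\cup\Sigma$. First I would invoke Proposition~\ref{th:wsfm-in-fm} to obtain a subinstance $M'\subseteq M$ with $M'\in\mathit{wsfmods}(D,\Sigma)$. By the hypothesis, $M'\models q$, i.e.\ there is a homomorphism $h$ from some disjunct $\psi_j(\mathbf{Y}_j)$ of $q$ into $M'$. Since $M'\subseteq M$, the same map $h$ witnesses a homomorphism from $\psi_j(\mathbf{Y}_j)$ into $M$, so $M\models q$. As $M$ was arbitrary, $D\cup\Sigma\models_\mathsf{fin} q$.

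The only genuine obstacle here is making sure that Proposition~\ref{th:wsfm-in-fm} delivers exactly what is needed, namely a \emph{subinstance} that is simultaneously a model of $D\cup\Sigma$ and well-supported; but this is precisely the statement of that proposition (and it is even strengthened there: every minimal finite model is well-supported, so one can concretely take $M'$ to be any minimal model contained in $M$, which exists because $M$ is finite and $D\subseteq M$). Since UBCQ satisfaction is monotone under instance inclusion --- a homomorphism into a subset is a homomorphism into the whole --- no further argument is required. The proof is therefore short and leans entirely on Proposition~\ref{th:wsfm-in-fm} together with the trivial monotonicity of query entailment under taking supersets of instances.
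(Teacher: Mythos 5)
Your proposal is correct and follows essentially the same route as the paper's own proof: both directions rest on $\mathit{wsfmods}(D,\Sigma)\subseteq\mathit{fmods}(D,\Sigma)$ for the easy implication, and on Proposition~\ref{th:wsfm-in-fm} plus monotonicity of UBCQ satisfaction under instance inclusion (the identity homomorphism) for the substantive one. No gaps.
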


\beforeProof

\begin{proof}
Clearly, by subset inclusion, if each finite model of $D\cup\Sigma$ satisfies the query $q$, then each well-supported finite model of $D\cup\Sigma$ satisfies $q$.
Moreover,
	as each finite minimal model is a well-supported finite model (Proposition~\ref{th:wsfm-in-fm}), then for each finite model $M'$ of $D\cup\Sigma$, we can find a well-supported finite model, that is minimal, $M$ of $D\cup\Sigma$, such that $M\subseteq M'$, and, in particular, there exists a homomorphism $h$ (i.e., the identity homomorphism) such that $h(M)\subseteq M'$. 
\end{proof}


\beforeSubsec

\subsection{Propagation ordering}\label{sec:propOrd}

Let us start with the preliminary notions of 
existentially well-supported atom and
propagated term.
Let $I$ be a well-supported finite instance,
and $(\alpha_1,\ldots,\alpha_m)$ be a well-supported ordering of $I$. 
An atom $\alpha$ of $I \setminus D$ is said \textit{existentially well-supported} 
w.r.t. the ordering $(\alpha_1,\ldots,\alpha_m)$ if, 
for each well-supporting rule $\rho$ for $\alpha$ w.r.t. $(\alpha_1,\ldots,\alpha_m)$,
it holds that ${\bf EV}(\rho)\neq\emptyset$.
Moreover, 
let $\alpha_j[k]=t$, for some position $k$,
then $t$ is said \textit{propagated} from an atom $\alpha_i$ in position $l$, 
whenever 
$i<j$, 
$\alpha_i[l]=t$, and 
there exist 
a well-supporting rule $\rho$ for $\alpha_j$ 
and a homomorphism $h$ such that 
$\alpha_i \in h(body(\rho))$.
%
Consider again ontology $\Sigma$ of Section~\ref{sec:overview}, and the well-supported finite model $M$ considered after Proposition~\ref{th:wsfm-in-fm}. For instance, the atom $f(c_1,c_1)$ is existentially well-supported.
Indeed, the unique way to well-support the atom comes from the first rule of $\Sigma$, that is an existential rule.
%
We are now ready to define the notion of propagation ordering.

\begin{definition}[Propagation ordering]
Let $D$ be a database, $\Sigma$ be a $\mathsf{joinless}$ ontology, $M\in \mathit{wsfmods}(D,\Sigma)$, and $(\alpha_1,\ldots,\alpha_m)$ be a well-supported ordering of $M$. 
For each $\alpha_j\in M$, we build a new atom $\langle \alpha_j \rangle$ as follows.
Let $t=\alpha_j[k]$. We have:
$(1)$ If $\alpha_j$ is an existentially well-supported atom and $k$ is an existential position, 
then $\langle \alpha_j \rangle[k] = \langle t,j,k  \rangle$,
where $\langle t,j,k  \rangle$ is called a \textit{starting point} of $t$;
$(2)$ If $t$ is a propagated term from some atom $\alpha_i$ in position $l$, 
then $\langle \alpha_j \rangle [k] = \langle \alpha_i \rangle [l]$; and
$(3)$ $\langle \alpha_j \rangle [k] = \alpha_j[k]$, otherwise.
We call $(\langle \alpha_1 \rangle,\ldots,\langle \alpha_m \rangle)$ a \textit{propagation ordering} of the well-supported ordering 
$(\alpha_1,\ldots,\alpha_m)$.
\end{definition}

Note that the same term could have several starting points.
This propagation ordering will be useful to remember a starting point of that particular term and its propagations in other atoms.

\begin{example}
Consider the following $\mathsf{joinless}$ ontology 
$\Sigma=\{s(X_1) \rightarrow \exists Y_1 p(X_1,Y_1)$;
$s(X_2) \rightarrow \exists Y_2 u(Y_2,X_2)$;
$p(X_3,Y_3),u(W_3,Z_3)\rightarrow r(Y_3,Z_3)$;
$p(X_4,Y_4)\rightarrow t(Y_4)\}$,
and the database $D=\{s(c_1)\}$.
As example, 
$M=\{s(c_1),t(c_2),t(n_1),$
$p(c_1,c_2),$
$p(c_1,n_1),$
$r(c_2,c_1),$ $r(n_1,c_1),$ $u(c_2,c_1),$ $u(n_1,$ $c_1)\}$ 
is a well-supported finite model of $D\cup\Sigma$.
Indeed, for instance, $(s(c_1),$ $p(c_1,c_2),$ $p(c_1,n_1),$ $u(c_2,c_1),$ $t(c_2),$ $u(n_1,c_1),$ $r(n_1,c_1),$ $t(n_1),r(c_2,c_1))$ is a well-supported ordering of $M$.
The existentially well-supported atoms are $p(c_1,c_2)$, $p(c_1,n_1)$, $u(c_2,c_1)$ and $u(n_1,c_1)$.
More specifically, 
$p(c_1,c_2)$ has the term $c_2$ in the existential position $2$, then
$\langle p(c_1,c_2) \rangle = p(c_1,\langle c_2,2,2 \rangle)$, as $p(c_1,c_2)$ is the second atom of the well-supported ordering considered;
$p(c_1,n_1)$ has the term $n_1$ in the existential position $2$, then
$\langle p(c_1,n_1) \rangle = p(c_1,\langle n_1,3,2 \rangle)$;
$u(c_2,c_1)$ has the term $c_2$ in the existential position $1$, then
$\langle u(c_2,c_1) \rangle = u(\langle c_2,4,1 \rangle,c_1)$;
$u(n_1,c_1)$ has the term $n_1$ in the existential position $1$, then
$\langle u(n_1,c_1) \rangle = u(\langle n_1,6,1 \rangle,c_1)$.
On the other hand, the term $c_2$ is propagated in the atom $t(c_2)$ in the first (and unique) position.
It comes from atom $p(c_1,c_2)$, and we know that the starting point of $c_2$ is $\langle c_2,2,2 \rangle$.
Therefore, $\langle t(c_2) \rangle = t(\langle c_2,2,2 \rangle)$.
Moreover, in a similarly way, we obtain that $\langle t(n_1) \rangle = t(\langle n_1,3,2 \rangle)$.
Finally, the term $n_1$ is propagated in the atom $r(n_1,c_1)$ in the first position, and it comes from atom $p(c_1,n_1)$; whereas
the term $c_1$ is propagated in the atom $r(n_1,c_1)$ in the second position, and it comes from atom $u(n_1,c_1)$. Therefore, $\langle r(n_1,c_1) \rangle = r(\langle n_1,3,2 \rangle, c_1)$.  
\hfill $\lhd$
\end{example}

With our technical tools in place, we are now able to prove 
the following technical result.

\begin{theorem}\label{th:f-to-fsn}
	For each $\Sigma \in \shy$, 
	if $D^c\cup\Sigma^c \models_\mathsf{wsf} q^c$
	then $D^c\cup\Sigma^c_a\models_\mathsf{wsf} q^c$.
\end{theorem}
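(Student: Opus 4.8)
The plan is to prove the contrapositive: assuming $D^c\cup\Sigma^c_a\not\models_{\mathsf{wsf}} q^c$, I must produce a well-supported finite model of $D^c\cup\Sigma^c$ that violates $q^c$. So fix a witness $M\in\mathit{wsfmods}(D^c,\Sigma^c_a)$ with $M\not\models q^c$ together with a well-supported ordering $(\alpha_1,\ldots,\alpha_m)$ of $M$ (recall that $\Sigma^c_a$ is joinless and, by Proposition~\ref{th:can-rew-of-shy-is-shy} and the active/harmless partition, still shy). First I would form the propagation ordering $(\langle\alpha_1\rangle,\ldots,\langle\alpha_m\rangle)$ of this ordering and set $M'=\{\langle\alpha_1\rangle,\ldots,\langle\alpha_m\rangle\}$. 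Since every database atom is propositional and no database atom is existentially well-supported, case $(3)$ of the definition yields $\langle\alpha_j\rangle=\alpha_j$ for each $\alpha_j\in D^c$, so $D^c\subseteq M'$, and $M'$ is finite. It then remains to establish the two properties asserted for the propagation-ordering tool: $(2)$ there is a homomorphism $g\colon M'\to M$, and $(1)$ $M'\in\mathit{wsfmods}(D^c,\Sigma^c)$.

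For $(2)$ the natural candidate is the tag-erasing map $g$ sending each starting point $\langle t,j,k\rangle$ to $t$ and fixing every other term; inspecting the three cases of the definition gives $g(\langle\alpha_j\rangle)=\alpha_j$ for all $j$, hence $g(M')=M$ and $g$ is a homomorphism. For $(1)$ I would first lift the well-supported ordering of $M$: the claim is that $(\langle\alpha_1\rangle,\ldots,\langle\alpha_m\rangle)$ is a well-supported ordering of $M'$ w.r.t. $D^c\cup\Sigma^c$, because the rule of $\Sigma^c_a$ that well-supports $\alpha_j$ in $M$ also well-supports $\langle\alpha_j\rangle$ in $M'$ once its homomorphism is composed with the renaming prescribed by the propagation ordering; here joinlessness of $\Sigma^c_a$ is essential, since the absence of body joins makes the lifted homomorphism consistent, and the same observation shows that $M'$ satisfies every active rule (a firing of an active rule in $M'$ projects through $g$ to a firing in $M\models\Sigma^c_a$, whose conclusion $\alpha_j$ re-lifts to the atom $\langle\alpha_j\rangle\in M'$ demanded by that firing).

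The crux is showing that $M'$ also satisfies the harmless rules $\Sigma^c_h$, and this is where shyness is used. A harmless rule $\rho^c$ has, by definition, a variable $X$ occurring in two distinct body atoms, and since $\Sigma^c$ is shy (Proposition~\ref{th:can-rew-of-shy-is-shy}), its first condition forces $X$ to be protected; hence $X$ occurs in the body at a position that is not invaded by any existential variable of $\Sigma^c$. I would then show that such a rule is vacuously satisfied in $M'$: by construction of the propagation ordering, the terms placed at invaded positions of pairwise distinct atoms are pairwise distinct fresh starting points, and a starting point propagates only along applications of active (joinless) rules, so it can never surface at a non-invaded position; therefore the occurrence of $X$ at its protected position must be matched to a non-fresh term, and then the well-supportedness of $M$ rules out the simultaneous presence in $M'$ of the remaining body atoms of $\rho^c$ carrying that same value at the relevant position, so $body(\rho^c)$ admits no homomorphism into $M'$. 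With $(1)$ and $(2)$ in place the theorem follows immediately: the hypothesis $D^c\cup\Sigma^c\models_{\mathsf{wsf}} q^c$ gives a homomorphism of some disjunct of $q^c$ into $M'$, which composed with $g$ yields a homomorphism of that disjunct into $M$, contradicting $M\not\models q^c$. The principal obstacle is exactly this harmless-rule step — making rigorous why the freshening carried out by the propagation ordering, combined with the ``protected position'' guarantee of shyness, renders every join demanded by a rule of $\Sigma^c_h$ unrealisable in $M'$ — with the lifting argument for the active rules a secondary but still delicate point.
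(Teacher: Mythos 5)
Your overall strategy is the same as the paper's: from a well-supported finite model $M$ of $D^c\cup\Sigma^c_a$, use the propagation ordering to build an $M'\in\wsfmods(D^c,\Sigma^c)$ together with a homomorphism $M'\to M$, with shyness guaranteeing that the harmless rules are vacuously satisfied in $M'$. However, your instantiation of this plan has a genuine gap at the point you call ``secondary'': taking $M'=\{\langle\alpha_1\rangle,\ldots,\langle\alpha_m\rangle\}$ wholesale does not preserve satisfaction of the \emph{active} rules. The propagation ordering tags each term occurrence according to \emph{one} chosen support of its atom, whereas a single atom of $M$ may have to witness the heads of several distinct firings whose body atoms receive different tags. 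Concretely, take the $\mathsf{linear}$ (hence $\shy$) ontology $\{a\rightarrow\exists Y\,p(Y);\ a\rightarrow\exists Y\,r(Y);\ p(X)\rightarrow q(X);\ r(X)\rightarrow q(X)\}$ with $D=\{a\}$; its canonical rewriting is joinless, so $\Sigma^c_a=\Sigma^c$, and $M=\{a, p_{[1]}(n), r_{[1]}(n), q_{[1]}(n)\}$ is a well-supported finite model. Your $M'$ contains $p_{[1]}(\langle n,2,1\rangle)$ and $r_{[1]}(\langle n,3,1\rangle)$ but only one copy of $q_{[1]}$ carrying one of the two tags, so one of the two datalog rules is violated and $M'\notin\mods(D^c,\Sigma^c)$. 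The paper avoids this by renaming only the terms that actually realise a join in the body of a harmless rule (and then propagating that renaming), rather than retagging every existential instantiation; your claim that a firing in $M'$ ``re-lifts'' to the demanded atom of $M'$ is exactly the step that fails.

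There are also two misstatements in the harmless-rule argument, although that part is repairable. First, ``protected'' does not mean that $X$ has an occurrence at a position invaded by \emph{no} existential variable; it means that \emph{for each} existential variable $Z$ there is \emph{some} occurrence of $X$ at a position not invaded by $Z$ (the quantifiers are swapped in your reading). Second, the terms placed at invaded positions of distinct atoms are \emph{not} pairwise distinct starting points: a single starting point propagates to many atoms, which is the whole purpose of the propagation ordering. The argument that actually closes the case is: every term sitting at a variable (null-placeholder) position of $M'$ traces back to a starting point created for some existential variable $Z$, and along joinless well-supporting rules such a term can only reach positions invaded by $Z$; hence if a repeated body variable $V$ of a harmless rule were matched to such a term, every occurrence of $V$ would lie at a position invaded by $Z$, making $V$ attacked by $Z$ and contradicting the protection that shyness of $\Sigma^c$ (Proposition~\ref{th:can-rew-of-shy-is-shy}) guarantees. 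Your sketch contains the right ingredient (``a starting point can never surface at a non-invaded position'') but, combined with the misread definition of protection, it does not yet assemble into a proof.
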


\beforeProof

\begin{proof}[Proof intuition]
%
Consider an arbitrary model $M\in \mathit{wsfmods}(D^c,\Sigma^c_a)$.
%
%
It suffices to prove that there exist $M'\in \mathit{wsfmods}(D^c,\Sigma^c)$ and 
a homomorphism $h'$ s.t. $h'(M') \subseteq M$. 
Indeed, by hypothesis, there exists a homomorphism $h$ s.t. $h(q)\subseteq M'$, and so
$(h'\circ h)(q)\subseteq M$. 
%
 
The difficulty here is that $M$ could not be a model of $D^c \cup \Sigma^c$.
Consider the database $D = \{s(c)\}$ and the $\shy$ ontology 
$\Sigma = \{s(X) \rightarrow \exists Y  p(Y)$;  
$s(X) \rightarrow \exists Y  r(Y)$; 
$p(X), r(X) \rightarrow g(X) \}$.
The canonical rewriting is $D^c = \{s_{[c]}\}$ and 
$\Sigma^c$ as follows:

\befAftVectors

{\small
\[\begin{array}{rclrclrclr}
s_{[c]} & \rightarrow & \exists Y  p_{[1]}(Y) & \ \ \ \ \ \ \ \ \ \
s_{[c]} & \rightarrow & \exists Y  r_{[1]}(Y) & \ \ \ \ \ \ \ \ \ \ \ \ \
p_{[c]}, r_{[c]} & \rightarrow & g_{[c]} & \\
s_{[1]}(X) & \rightarrow & \exists Y  p_{[1]}(Y) & \ \ \ \ \ \ \ \ \ \
s_{[1]}(X) & \rightarrow & \exists Y  r_{[1]}(Y) & \ \ \ \ \ \ \ \ \ \ \ \
p_{[1]}(X), r_{[1]}(X) & \rightarrow & g_{[1]}(X) &
\end{array}
\]}

\befAftVectors

\noindent One can verify that $M = \{s_{[c]}, p_{[1]}(n_1), r_{[1]}(n_1)\}$ is a 
(minimal) well-supported finite model of $D^c \cup \Sigma^c_a$
since, by Proposition~\ref{th:can-rew-of-shy-is-shy}, $\Sigma^c$ is $\shy$, and
since $\Sigma^c_a$ is obtained from $\Sigma^c$ by discarding the last harmless rule.
However, $M$ is not
a model of $D^c \cup \Sigma^c$ because the last rule is not satisfied.

The idea is to show how to construct from $M$
a model $M'\in \mathit{wsfmods}(D^c,\Sigma^c)$ that can be homomorphically 
mapped to $M$.
Intuitively, we identify the {\em starting points} in which 
existential variables of $\Sigma^c_a$ have been satisfied
and rename the introduced terms using a propagation ordering.

In the example above, consider the
well-supported ordering $(s_{[c]}, p_{[1]}(n_1), r_{[1]}(n_1))$ of $M$,
replace $n_1$ in $p_{[1]}(n_1)$ by $\langle n_1,2,2\rangle$ 
(null $n_1$ introduced in the second atom in the second position), and
replace $n_1$ in $r_{[1]}(n_1)$ by $\langle n_1,3,2\rangle$ 
(null $n_1$ introduced in the third atom in the second position).
Then, since $M$ is well-supported, we propagate (if needed)
these new terms according the supporting ordering.
In our case, $M' =  \{s_{[c]}, p_{[1]}(\langle n_1,2,2\rangle), r_{[1]}(\langle n_1,3,2\rangle)\}$
is now a finite model of $D^c \cup \Sigma^c$ that can be mapped to $M$.
\end{proof}

\beforeSubsec

\subsection{The main result}



\begin{lemma}\label{lm:main}
		Under $\mathsf{shy}$ ontologies, $D^c \cup\Sigma^c \models q^c$ if, and only if,
		$D^c\cup\Sigma^c\fmodels q^c$.
\end{lemma}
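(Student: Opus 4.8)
The plan is to prove the two directions separately. The ``only if'' direction is immediate: since $\mathit{fmods}(D^c,\Sigma^c) \subseteq \mathit{mods}(D^c,\Sigma^c)$, any UBCQ satisfied by every model of $D^c \cup \Sigma^c$ is in particular satisfied by every finite model of it, so $D^c \cup \Sigma^c \models q^c$ trivially implies $D^c \cup \Sigma^c \fmodels q^c$. All the work is in the converse.

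For the ``if'' direction I would assume $D^c \cup \Sigma^c \fmodels q^c$ and chase it around the cycle of Figure~\ref{fig:chain}. First, by Theorem~\ref{th:fsn-to-f} (using the trivial inclusion $\mathit{wsfmods} \subseteq \mathit{fmods}$), $D^c \cup \Sigma^c \fmodels q^c$ gives $D^c \cup \Sigma^c \wsfmodels q^c$. Next, since $\Sigma$ is $\shy$, Theorem~\ref{th:f-to-fsn} applies and yields $D^c \cup \Sigma^c_a \wsfmodels q^c$, where $\Sigma^c_a$ is the active (hence $\mathsf{joinless}$, and still $\shy$) fragment of $\Sigma^c$. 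Applying Theorem~\ref{th:fsn-to-f} again in the other direction ---this time invoking Proposition~\ref{th:wsfm-in-fm}, which guarantees that every finite model of the theory contains a well-supported one--- I obtain $D^c \cup \Sigma^c_a \fmodels q^c$.

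Now I would invoke finite controllability of $\mathsf{joinless}$, due to \citeN{DBLP:conf/lics/GogaczM13}: since $\Sigma^c_a$ is a $\mathsf{joinless}$ ontology, $D^c \cup \Sigma^c_a \fmodels q^c$ implies $D^c \cup \Sigma^c_a \models q^c$. Finally, Theorem~\ref{thm:activeCanonicalGeneral} ---which rests only on the inclusion $\Sigma^c_a \subseteq \Sigma^c$, hence $\mathit{mods}(D^c,\Sigma^c) \subseteq \mathit{mods}(D^c,\Sigma^c_a)$--- lifts this to $D^c \cup \Sigma^c \models q^c$, closing the chain. Composing these five implications gives $D^c \cup \Sigma^c \fmodels q^c \implies D^c \cup \Sigma^c \models q^c$, which together with the trivial direction establishes the lemma.

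The genuinely delicate link is Theorem~\ref{th:f-to-fsn}: passing from well-supported finite models of the full canonical rewriting $\Sigma^c$ to those of its joinless core $\Sigma^c_a$ is \emph{not} a matter of set inclusion, because a model of $\Sigma^c_a$ need not satisfy the harmless rules $\Sigma^c_h$. There one must start from an arbitrary $M \in \mathit{wsfmods}(D^c,\Sigma^c_a)$ and surgically rebuild an $M' \in \mathit{wsfmods}(D^c,\Sigma^c)$ that maps homomorphically onto $M$ ---by locating the \emph{starting points} where existential variables of $\Sigma^c_a$ were satisfied, renaming the introduced terms, and propagating them along a propagation ordering, crucially using that $\shy$-ness forces the chase never to fire a harmless rule. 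I expect this to be the main obstacle; everything else in the proof of Lemma~\ref{lm:main} is bookkeeping over results already in place.
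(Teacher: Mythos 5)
Your proposal is correct and follows exactly the paper's own argument: the same five-step chain around Figure~\ref{fig:chain} (Theorem~\ref{th:fsn-to-f} in both directions, Theorem~\ref{th:f-to-fsn}, finite controllability of $\mathsf{joinless}$ from \citeN{DBLP:conf/lics/GogaczM13}, and Theorem~\ref{thm:activeCanonicalGeneral}), with the trivial ``only if'' direction handled identically. Your identification of Theorem~\ref{th:f-to-fsn} as the genuinely delicate step, and your sketch of its starting-point/propagation-ordering construction, also matches the paper.
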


\beforeProof

\begin{proof}
Clearly, the ``only if'' implication is straightforward.
Hence, given a $\shy$ ontology $\Sigma$, we have to prove that 
$D^c \cup\Sigma^c \models q^c$,
whenever $D^c \cup\Sigma^c \models_\mathsf{fin} q^c$, 
for each database $D$ and UBCQ $q$.
Suppose that 
$D^c \cup\Sigma^c \models_\mathsf{fin} q^c$, i.e., the query $q^c$ is satisfied by each finite model of $D^c\cup\Sigma^c$.
Thus, by Theorem~\ref{th:fsn-to-f},
holds that
$D^c\cup\Sigma^c\models_\mathsf{wsf} q^c$, that is, the canonical rewriting of the query $q$ is satisfied by each well-supported finite model of the logical theory $D^c\cup\Sigma^c$.
Then, by Theorem~\ref{th:f-to-fsn},
holds that
$D^c\cup\Sigma^c_a\models_\mathsf{wsf} q^c$, that is, the canonical rewriting of the query $q$ is satisfied by each well-supported finite model of the $\mathsf{joinless}$ logical theory $D^c\cup\Sigma^c_a$.
Moreover, again, by Theorem~\ref{th:fsn-to-f},
we obtain that
$D^c\cup\Sigma^c_a\models_\mathsf{fin} q^c$, that is $q^c$ is satisfied also by every finite model of the previous theory.
Now, as $\Sigma^c_a$ is a $\mathsf{joinless}$ ontology, by the finite controllability of $\mathsf{joinless}$ ontologies proved by \citeN{DBLP:conf/lics/GogaczM13},
holds that 
$D^c\cup\Sigma^c_a\models q^c$.
Finally, by Theorem~\ref{thm:activeCanonicalGeneral},
we have that
$D^c\cup\Sigma^c\models q^c$, i.e. the query $q^c$ is satisfied by each model (finite or infinite) of $D^c\cup\Sigma^c$.
\end{proof}

Summing-up, Theorem~\ref{th:main} follows by combining 
Lemma~\ref{lm:main} with the properties of the canonical rewriting 
proved in Section~\ref{sec:tools}.

\section{Related work}\label{sec:relWork}

To complete the related works started with the Introduction,
we recall that the notion of finite controllability was formalized for the first time
by \citeN{Rosati:2006:DFC:1142351.1142404}
while he was working on a question that had been left open 
two decades before by \citeN{DBLP:journals/jcss/JohnsonK84}
about containment of conjunctive queries 
in case of both arbitrary and finite databases.
Basically, using our terminology, they proved that
ontologies mixing both inclusion-dependencies and functional-dependencies
are not finitely controllable,
by leaving open the case where ontologies contain inclusion-dependencies only.
Rosati then answered positively this question.

The semantic equivalence of fundamental reasoning tasks 
under finite and infinite models is not at all a prerogative
of the database community.
A sister yet orthogonal property of finite controllability
is of paramount importance also in logic, where it has been investigated
much earlier.
It is known as {\em finite model property} or {\em finite satisfiability}~\cite{Ebbinghaus1995}, and 
it asks for a class $\mathcal{C}$ of sentences whether every satisfiable sentence of $\mathcal{C}$ has a finite model.
For example, both G\"{o}del and Sch\"{u}tte proved that $\forall^2 \exists^*$ first-order
sentences are finitely satisfiable.

\new{Although reasoning under finite models has a long history
and it has been actively investigated in various fields of Computer Science, 
finite controllability remains 
open for many languages combining or generalizing
the key properties underlying the 
basic classes depicted in Figure~\ref{fig:taxonomy}.
By way of example, we mention
$(i)$ $\mathsf{glut}$-$\mathsf{guarded}$~\cite{DBLP:conf/ijcai/KrotzschR11}, extending $\mathsf{guarded}$ and $\mathsf{weakly}$-$\mathsf{acyclic}$;
$(ii)$ $\mathsf{weakly}$-$\mathsf{sticky}$-$\mathsf{join}$~\cite{DBLP:journals/ai/CaliGP12}, extending $\mathsf{sticky}$-$\mathsf{join}$, $\mathsf{weakly}$-$\mathsf{acyclic}$ and $\mathsf{shy}$; and 
$(iii)$ $\mathsf{tame}$~\cite{DBLP:journals/tplp/GottlobMP13}, extending $\mathsf{sticky}$ and $\mathsf{guarded}$.

Between $\mathsf{guarded}$ and 
$\mathsf{glut}$-$\mathsf{guarded}$, it is worth to recall $\mathsf{weakly}$-$\mathsf{guarded}$~\cite{DBLP:journals/jair/CaliGK13}, 
where each rule body has an atom covering all those variables 
that only occur in invaded (a.k.a. affected) positions.
Actually, this class is finitely controllable
although the proof sketch given by~\citeN{DBLP:journals/corr/BaranyGO13}
has some hole (there, some model of $D \cup \Sigma'$ might not satisfy $\Sigma$).
In fact, our canonical rewriting 
yields an ontology that can be partitioned in 
{\em active} and {\em harmless}, where 
the active part is $\mathsf{guarded}$.
Well-supported models and propagation ordering behave as for $\shy$.

An additional clarification concerns the notions of linear and sticky-join
considered by \citeN{JCSS17}, since they are not standard (actually stricter).
In the former, repeated variables are admitted only in rule heads,
while for the latter the authors state that the difference between 
sticky and sticky-join ``can only be seen if repeated variables 
in the heads of the rules are allowed''. 
(Regarding $\mathsf{sticky}$, the classical notion 
is only rephrased: their ``immortal'' positions 
correspond to positions being not marked.)
From such a mismatch, however, it follows that finite controllability 
of $\mathsf{sticky}$-$\mathsf{join}$ was unknown before our work.
A curious reader may verify that the proof
of their Lemma 4  breaks down when moving to a
$\mathsf{linear}$ (hence $\mathsf{sticky}$-$\mathsf{join}$) ontology
such as $\Sigma =\{ p(X,X) \rightarrow r(X)$; $r(X)\rightarrow \exists Y \, r(Y) \}$
---inducing no immortal position since all positions $p[1]$, $p[2]$ and $r[1]$ host marked variables--- paired with the singleton database $D = \{p(c,c)\}$.}

\section{Conclusion}\label{sec:DiscConc}

\new{By demonstrating that $\shy$ is finitely controllable, 
we complete an important picture around the basic 
decidable Datalog$^\pm$ classes.
But we take it as a starting point rather than an ending one.

On the one hand, finite controllability 
immediately implies decidability of OBQA.
Actually, via the soundness and completeness of the chase procedure we know that
the problem of deciding whether a UBCQ is {\em true} over a Datalog$^\pm$ theory 
is recursively enumerable.
But the complementary
problem of deciding whether a UBCQ is {\em false}
over a finitely controllable Datalog$^\pm$ class $\mathcal{C}$
is recursively enumerable too. 
In fact, each theory $D \cup \Sigma$, with $\Sigma \in \mathcal{C}$,
always admits a fair lexicographic enumeration of its finite models.
Unfortunately, such a na\"{\i}ve procedure would be 
inefficient in practice. 
Making it usable and competitive for real world 
problems is challenging and it is part of our ongoing work. 
Basically, this would lead to a tool able to deal with
any finitely controllable fragment, some of which (e.g., $\mathsf{guarded}$) 
have no effective implementation.

On the other hand, we believe the techniques developed in this paper
could have future applications.
For example, we are working on an extended version 
of our canonical rewriting that encodes in the predicates also a limited amount of 
nulls. This requires more complex techniques, which however would apply 
to classes using the key properties underlying 
$\mathsf{weakly}$-$\mathsf{acyclic}$, such as 
$\mathsf{glut}$-$\mathsf{guarded}$ and $\mathsf{weakly}$-$\mathsf{sticky}$-$\mathsf{join}$ (see Section~\ref{sec:relWork}).
Hence, by combining these techniques with the above tool for finitely controllable classes, we aim at the design and implementation of a reasoner 
able to deal with ontologies falling in any known decidable Datalog$^\pm$ class.}

Finally ---even if the unrestricted set of existential rules 
cannot be finitely controllable since it is not decidable---
it is still open, to the best of our knowledge,
whether there exists, or not, a 
fragment of existential rules which is decidable but not finitely controllable.

\section*{Acknowledgement}
\thanks{
The paper has been partially supported by the Italian Ministry for Economic Development (MISE) under project ``PIUCultura -- Paradigmi Innovativi per l'Utilizzo della Cultura'' (n. F/020016/01-02/X27),
and under project ``Smarter Solutions in the Big Data World (S2BDW)'' (n. F/050389/01-03/X32) funded
within the call ``HORIZON2020'' PON I\&C 2014-2020.
}

\bibliographystyle{acmtrans}
\bibliography{bibtex}

\label{lastpage}

\newpage

\section*{Appendix}

\subsection*{{Shy} existential rules}\label{sec:Shy}
This section is devoted to recall the formal definition of $\mathsf{shy}$ ontologies and their syntactic properties,
as defined in~\citeN{DBLP:conf/kr/LeoneMTV12}.
%
For notational convenience and without loss of generality, we assume here that 
each pair of rules of an ontology share no variable.
%
Let $\Sigma$ be an ontology, 
$\alpha$ be a $m$-arity atom, 
$i\in\{1,\ldots,m\}$ be an index, 
$pred(\alpha)  = a$, and
$X$ be an existential variable occurring in some rule of $\Sigma$.
We say that position $a[i]$ is \textit{invaded} by $X$ if there exists a rule $\rho\in\Sigma$ such that $head(\rho)=\alpha$ and 
\begin{itemize}
\item[($i$)] $\alpha[i] = X$; or 
\item[($ii$)] $\alpha[i]$ is a universal variable of $\rho$ and 
all of its occurrences in $body(\rho)$ appear in positions invaded by $X$.
\end{itemize}
Let $\phi({\bf X})$ be a conjunction of atoms, and let $X\in {\bf X}$.
We say that $X$ is \textit{attacked} by a variable $Y$ in $\phi({\bf X})$
if all the positions where $X$ appears are invaded by $Y$.
On the other hand, we say that $X$ is \textit{protected} in $\phi({\bf X})$, if it is attacked by no variable.

A rule $\rho$ of an ontology $\Sigma$ is called \textit{shy} w.r.t. $\Sigma$ if the following conditions are both satisfied:
\begin{itemize}
\item[($i$)] 
if a variable $X$ occurs in more than one body atom,
then $X$ is protected in $body(\rho)$;
\item[($ii$)] 
if two distinct variables
are not protected in $body(\rho)$
but occur both in $head(\rho)$ and in two different body atoms,
then they are not attacked by the same variable.
\end{itemize}
Finally, if each $\rho \in\Sigma$ is shy w.r.t. $\Sigma$, then call $\Sigma$ a
$\shy$ ontology. 

\begin{example}
Consider the following rules
\begin{center}
$\begin{array}{lrcl}
\rho_1= & s(X_1) & \rightarrow & \exists Y_1 p(X_1,Y_1); \\
\rho_2= & p(X_2,Y_2),u(Y_2) & \rightarrow & r(X_2,Y_2); \\
\rho_3= & t(X_3) &\rightarrow &\exists Y_3 u(Y_3). \\
\end{array}$
\end{center}
Let $\Sigma=\{\rho_1,\rho_2,\rho_3\}$.
Clearly, $\rho_1$ and $\rho_3$ are $\shy$ rules w.r.t. $\Sigma$,
since they are also $\mathsf{linear}$ rules, namely rules with one single body atom,
which cannot violate any of the two shy conditions.
Moreover, rule $\rho_2$ is also $\shy$ w.r.t. $\Sigma$ as the positions $p[2]$ and $u[1]$ are invaded by disjoint sets of existential variables. Indeed, $p[2]$ is invaded by the existential variable $Y_1$ of the first rule, and $u[1]$ is invaded by the existential variable $Y_3$ of the third rule.
Therefore, $\Sigma$ is a $\shy$ ontology.

Now, consider the further three existential rules
\begin{center}
$\begin{array}{lrcl}
\rho_4= & u(X_4) & \rightarrow & \exists Y_4 p(Y_4,X_4); \\
\rho_5= & u(X_5) & \rightarrow & \exists Y_5 p(X_5,Y_5); \\
\rho_6= & r(X_6,X_6) & \rightarrow & v(X_6). \\
\end{array}$
\end{center}

Let $\Sigma'$ be the ontology $\Sigma\cup\{\rho_4\}$. 
It is easy to see that $\rho_1$, $\rho_3$ and $\rho_4$ are $\shy$ w.r.t. $\Sigma'$.
However, $\rho_2$ is not $\shy$ w.r.t. $\Sigma'$, as property $(i)$ is not satisfied.
Indeed, the variable $Y_2$ occurring in two body atoms in $body(\rho_2)$ is not protected, as the position $p[2]$ and $u[1]$ (the only positions in which $Y_2$ occurs) are invaded by the same existential variable, namely $Y_3$.
Therefore, $\Sigma'$ is not a $\shy$ ontology.

Let $\Sigma''$ be the ontology $\Sigma\cup\{\rho_5,\rho_6\}$. 
Again, $\rho_1$, $\rho_3$, $\rho_5$ and $\rho_6$ are trivially $\shy$ w.r.t. $\Sigma''$;
and again $\rho_2$ is not $\shy$ w.r.t. $\Sigma''$.
However, this time, $\rho_2$ is not $\shy$ because property $(ii)$ is not satisfied.
Indeed, the universal variables $X_2$ and $Y_2$, 
occurring in two different body atoms and in $head(\rho_2)$, 
are not protected in $body(\rho_2)$, 
as the position $p[1]$ and $u[1]$ (in which occur $X_2$ and $Y_2$, respectively) 
are attacked by the same variable $Y_3$.
Therefore, $\Sigma''$ is not a $\shy$ ontology.
%
\hfill $\lhd$
\end{example}

Essentially, during every possible chase step,
condition $(i)$ guarantees that each variable that occurs in more than one body atom
is always mapped into a constant.
Although this is the key property behind $\mathsf{shy}$, 
we now explain the role played by condition $(ii)$ and its importance.
To this aim, we exploit again $\Sigma''$, as introduced in the previous example, and
we reveal why this second condition, in a sense, turns into the first one.
Indeed, the rule $\rho_6$ bypasses the propagation of the same null in $\rho_2$ via different variables.
However, one can observe that the rules $\rho_2$ and $\rho_6$ imply the rule
$\rho_6': p(X_6,Y_6),u(X_6) \rightarrow v(X_6)$, which of course 
does not satisfy condition $(i)$.
Actually, it is not difficult to see that every ontology can be rewritten (independently from $D$ and $q$) into an en equivalent one (w.r.t. query answering) where
all the rules satisfy condition $(i)$. As an example, consider the following rule $\rho$
\begin{center}
	$\begin{array}{rcl}
	p(X_1,Y_1), r(Y_1,Z_1), u(Z_1, Y_1) & \rightarrow & \exists W_1 \, t(X_1,Z_1,W_1),
	\end{array}$
\end{center}
and assume that it belongs to some ontology $\Sigma$ and that it is not shy w.r.t. $\Sigma$
because it violates condition $(i)$ only.
Let us now construct $\Sigma'$ as $\Sigma \setminus \{\rho\}$ plus the following two rules:
\begin{center}
	$\begin{array}{rcl}
	p(X_1,Y_1), r(Y_1',Z_1), u(Z_1',Y_1'') & \hspace*{-0.15cm}\rightarrow & \hspace*{-0.15cm}aux_\rho (X_1,Y_1,Y_1',Z_1,Z_1',Y_1'');\\
	aux_\rho(X_1,Y_1,Y_1,Z_1,Z_1,Y_1) & \hspace*{-0.15cm}\rightarrow & \hspace*{-0.15cm}\exists W_1 \, t(X_1,Z_1,W_1).
	\end{array}$
\end{center}
Both the new rules satisfy now condition $(i)$ w.r.t. $\Sigma'$.
Moreover, it is not difficult to see that, for every database $D$ and for every UBCQ $q$, it holds
that $D \cup \Sigma \models q$ if and only if $D \cup \Sigma' \models q$. However, since
$\rho$ does not satisfy condition $(i)$, this immediately implies that
the first new rule does not satisfy condition $(ii)$.
%

The syntactic properties of $\mathsf{shy}$ make the class quite expressive
since it strictly contains both $\mathsf{linear}$ and $\mathsf{datalog}$.
Moreover, these properties are easy recognizable 
and guarantee efficient answering to conjunctive queries,
as experimentally shown in~\citeN{DBLP:conf/kr/LeoneMTV12}.
In fact, ontology-based query answering over $\mathsf{shy}$ ontologies
preservers the same data and combined complexity of OBQA over $\mathsf{datalog}$,
namely \textsc{PTime}-complete and \textsc{ExpTime}-complete, respectively.

\subsection*{Formal Proofs}

\begin{proof}[Proof of Proposition~\ref{th:inf-to-inf}]
We prove that
$\rp(\mathit{chase}(D^c,\Sigma^c)) = \mathit{chase}(D,\Sigma)$
by induction on the chase step.
Let $I_0=D\subset I_1\subset \ldots\subset I_m\subset \ldots$ be a chase procedure of $D$ and $\Sigma$;
and let $I^c_0=D^c\subset I^c_1\subset \ldots\subset I^c_m\subset \ldots$ be a chase procedure of $D^c$ and $\Sigma^c$.

Clearly, the base case follows, since, by definition of the canonical rewriting of $D$, $\rp(D^c)=D$.

Then, assume that $\rp(I^c_m)=I_m$. We have to prove that $\rp(I^c_{m+1})=I_{m+1}$.
By definition of chase step, there exist a rule $\rho\in\Sigma$ and a homomorphism $h$ from $body(\rho)$ to $I_m$, such that $\langle \rho,h\rangle(I_m)=I_{m+1}$. That is,
$I_{m+1}=I_m\cup\{h(head(\rho))\}$.
By construction of a canonical rule, 
there exists a safe substitution $\varsigma$ w.r.t. $\rho$, such that
$\varsigma(\rho)^c$ is a canonical rule and, by inductive hypothesis,
there exists a homomorphism $h^c$ from $body(\varsigma(\rho)^c)$ to $I^c_m$.
Consider the following homomorphism
$(h^c)'= (h\setminus h|_{\bf X})\cup 
h^c|_{\bf X} \supseteq h^c|_{\bf X}$.
Therefore,
$I^c_{m+1}=I^c_m\cup \{(h^c)'(head(\langle \rho,\varsigma \rangle))\}$.
Moreover,
\begin{center}
$\begin{array}{lll}
\rp(I^c_{m+1}) & =  \rp(I^c_m\cup \{(h^c)'(head(\varsigma(\rho)^c))\}) & = \\
&=  \rp(I^c_m)\cup \rp(\{(h^c)'(head(\varsigma(\rho)^c))\}) & = \\
&=  I_m \cup \{h'(\rp(head(\varsigma(\rho)^c)))\} & = \\
& = I_m \cup \{h'(head(\rho))\} & =I_{m+1}.
\end{array}$
\end{center}
Finally, let $q^c$ be the canonical rewriting of the UBCQ $q=\exists {\bf Y}_1 \psi_1({\bf Y}_1) \vee \ldots \vee \exists {\bf Y}_k \psi_k({\bf Y}_k)$. For each $j\in\{1,\ldots,k\}$, consider the safe substitution $\varsigma_j$ mapping each variable of $\psi_j({\bf Y}_j)$ in a different null. 
Therefore, there exists a conjunction of atoms, say $\psi_j^c({\bf Y}_j)= \varsigma_j(\psi_j({\bf Y}_j))^c$ in $q^c$, such that 
$\rp(\psi_j^c({\bf Y}_j))=\psi_j({\bf Y}_j)$, for each $j\in\{1,\ldots,k\}$.
Hence, $q \subseteq \rp(q^c)$. 
Moreover, it is easy to see that, each other safe substitution $\varsigma'$ w.r.t. some $\psi_j$, produces a conjunction of atoms, $\varsigma'(\psi_j({\bf Y}_j))^c$ such that
$\rp(\varsigma'(\psi_j({\bf Y}_j))^c)$ is contained in 
$\rp(\varsigma_j(\psi_j({\bf Y}_j))^c)$. Therefore, $\rp(q^c) \subseteq q$.
Thus, $\rp(q^c)= q$.
\end{proof}

\new{
\begin{proof}[Proof of Theorem~\ref{thm:inf-to-inf}]
We know that, for each database $D$, ontology $\Sigma$ and UBCQ $q$, it holds that 
$D\cup\Sigma\models q$ if and only if
$\mathit{chase}(D,\Sigma)\models q$~\cite{DBLP:journals/tcs/FaginKMP05}.
Therefore, also	
$D^c\cup\Sigma^c\models q^c$ if and only if
$chase(D^c,\Sigma^c)\models q^c$.
Moreover, by Proposition~\ref{th:inf-to-inf}, we have that
$\rp(\mathit{chase}(D^c,\Sigma^c)) = \mathit{chase}(D,\Sigma)$ and $\rp(q^c)\equiv q$.
Hence, remain to prove that
$\rp(\mathit{chase}(D^c,\Sigma^c))\models \rp(q^c)$
if and only if
$chase(D^c,\Sigma^c)\models q^c$.

We prove the ``if'' part, given that the ``only if'' part can be obtained retracing the chain of the following implications.
Suppose that 
$chase(D^c,\Sigma^c)\models q^c$.
Therefore, there is a homomorphism $h$ from at least one disjunct of $q^c$, say $\varsigma_j(\psi_j(\Y_j))^c$ (where $\varsigma_j$ is a canonical substitution), to $\mathit{chase}(D^c,\Sigma^c)$,
that is $h(\varsigma_j(\psi_j(\Y_j))^c)$ $\subseteq$ $chase(D^c,\Sigma^c)$.
Therefore,  
$\rp(h(\varsigma_j(\psi_j(\Y_j))^c))$ $\subseteq$ 
$\rp(chase(D^c,\Sigma^c))$.
Moreover, note that 
$\rp(h(\varsigma_j(\psi_j(\Y_j))^c))$
$=$ $h(\rp(\varsigma_j(\psi_j(\Y_j))^c))$.
Hence,
$h(\rp(\varsigma_j(\psi_j(\Y_j))^c))$ $\subseteq$ 
$\rp(chase(D^c,\Sigma^c))$.
Thus, $h$ is also a homomorphism from
a disjunct of $\rp(q^c)$ to $\rp(chase(D^c,\Sigma^c))$,
that is $\rp(\mathit{chase}(D^c,\Sigma^c))\models \rp(q^c)$.
\end{proof}
}

\begin{proof}[Proof of Proposition~\ref{th:can-rew-of-shy-is-shy}]
Let $\Sigma$ be a $\mathsf{shy}$ ontology.
Note that, for each rule $\rho\in\Sigma $, there exists a rule 
$\varsigma(\rho)^c\in \Sigma^c$
such that $\varsigma(X^i)=n_i$ for each variable $X^i$ occurring in $\rho$.
It is easy to see that a such $\varsigma$ is a safe substitution.
We denote by $\bar{\Sigma}^c$ the set of all and anly this kind of rules in $\Sigma^c$.
Note that, if $\Sigma^c$ is a $\mathsf{shy}$ ontology, then
$\bar{\Sigma}^c\subseteq \Sigma^c$ is also a $\mathsf{shy}$ ontology.

By contradiction, suppose that $\bar{\Sigma}^c$ is not a $\mathsf{shy}$ ontology.

First, suppose that there exists a rule $\varsigma(\rho)^c\in\bar{\Sigma}^c$ such that
there exists a variable, say $X$, occurring in more than one body atom and
$X$ is not protected in $body(\varsigma(\rho)^c)$.
Therefore, for each existential variable $Y$,
there exists an atom $\beta\in body(\varsigma(\rho)^c)$ and
 some position $pred(\beta)[i]$ in which $X$ occurs, and $pred(\beta)[i]$ is not invaded by $Y$.
Consider the unpacked rule $\rp(\varsigma(\rho)^c)=\rho\in\Sigma$.
Therefore, by construction, 
for each existential variable $Y$,
there exists $\alpha\in body(\rho)$ and
some position $pred(\alpha)[j]$ in which $X$ occurs, and $pred(\alpha)[j]$ is not invaded by $Y$.
Hence, $X$ occurs in more than one body atom of $\rho$ and 
$X$ is not protected in $body(\rho)$.
So that, $\rho$ is not a $\mathsf{shy}$ rule, and, thus, $\Sigma$ is not a $\mathsf{shy}$ ontology.

Then, suppose that there exists a rule $\varsigma(\rho)^c\in\bar{\Sigma}^c$ such that
there are two distinct universal variables, say $X$ and $Y$, 
that are not protected in $body(\varsigma(\rho)^c)$; 
occur in $head(\varsigma(\rho)^c)$;
occur in two different body atoms; 
and they are attacked by the same variable.
Therefore, there exists an existential variable $Z$ such that $X$ and $Y$ occur only in invaded position by $Z$.
Consider again the unpacked rule $\rp(\varsigma(\rho)^c)=\rho\in\Sigma$.
Then, by the unpacking function, $X$ and $Y$ are not protected in $body(\rho)$, and 
they occur in $head(\rho)$, in two different body atoms, and 
only in invaded position by $Z$.
Thus, they are attacked by the same variable.
Therefore, also in this case, $\rho$ is not a $\mathsf{shy}$ rule.
Hence, $\Sigma$ is not a $\mathsf{shy}$ ontology.
\end{proof}

\begin{proof}[Proof of Proposition~\ref{th:wsfm-in-fm}]
Let $M$ be a finite model of $D\cup\Sigma$.
Clearly, if $M$ is a well-supported finite model of $D\cup\Sigma$, we are done.
Therefore, suppose that $M$ is not a well-supported finite model of $D\cup\Sigma$.
Let $\Omega_1=(\alpha_1,\ldots,\alpha_m)$ be an ordering of the atoms of $M$.
Hence, by assumption, 
there exists $\alpha\in M$ 
that is not a well-supported atom w.r.t. $\Omega_1$.
Let $\alpha_{j_1}$ be the first atom in the ordering $\Omega_1$ that is not well-supported.
And consider a new ordering
$\Omega_2=(\alpha_1,\ldots,\alpha_{j_1-1},\alpha_{j_1+1},\ldots,\alpha_m,\alpha_{j_1})$,
where $\alpha_{j_1}$ is shifted from the position ${j_1}$ to the  position $n$.
As $M\not\in \mathit{wsfmods}(D,\Sigma)$, then $\Omega_2$ is not a well-supported ordering of $M$.
Moreover, the first $j_1-1$ atoms are well-supported w.r.t. $\Omega_2$.
Therefore, let $\alpha_{j_2}$ be the first atom in the ordering $\Omega_2$ that is not well-supported. Again, we consider a new ordering, say $\Omega_3$, where $\alpha_{j_2}$ is shifted from position $j_2-1$ to the position $n$.
Iteratively, we build a sequence $\Omega_1,\Omega_2,\ldots,\Omega_m,\ldots$ of orderings that are not well-supported.
Note that, as the number of different orderings is finite, there exist at least two orderings in the sequence that are the same.
Therefore, let $\Omega_{m_1}$ and $\Omega_{m_2}$ be
the first two orderings of the sequence, with $m_2>m_1$, 
such that $\Omega_{m_1}=\Omega_{m_2}$ 
(i.e., $\Omega_{m_1}$ and $\Omega_{m_2}$ are the same ordering).
Consider the subset $A\subseteq M$ containing the 
first $n-(m_2-m_1)$ elements in $\Omega_{m_1}$, and
the set $B$ of the last $m_2-m_1$ atoms in $\Omega_{m_1}$.
By construction, $A$ is a well-supported instance.
Moreover, each $\beta\in B$ is not well-supported by $A$, as $\Omega_{m_2}=\Omega_{m_1}$.
That is, there is no rule $\rho$ in $\Sigma$ and no homomorphism $h$ such that 
$h(body(\rho))\subseteq A$ and $h(head(\rho))=\{\beta\}$.
Hence, as $M$ is a model,
whenever $A\models body(\rho)$, there exists an atom $\alpha$ in $A$,
such that $\alpha\models head(\rho)$.
Therefore, $A$ is a model.

To complete the proof, let $M$ be a finite minimal model of $D\cup\Sigma$.
As just proved, there exists a well-supported finite model $M'\subseteq M$.
By minimality of $M$, the model $M'$ must be equal to $M$.
Therefore, $M$ is a well-supported finite model.
\end{proof}

\begin{proof}[Proof of Theorem~\ref{th:f-to-fsn}]
We have to prove that for each $M\in \mathit{wsfmods}(D^c,\Sigma^c_a)$, 
there exist $M'\in \mathit{wsfmods}(D^c,\Sigma^c)$ and 
a homomorphism $h'$ such that $h'(M')\subseteq M$.
Indeed, by hypothesis, there exists a homomorphism $h$ such that $h(q)\subseteq M'$, and so
$(h'\circ h)(q)\subseteq M$. 

Let $M\in \mathit{wsfmods}(D^c,\Sigma^c_a)$, and
let $(\alpha_1,\ldots,\alpha_m)$ be a well-supported ordering of $M$, and
let $(\langle \alpha_1 \rangle,\ldots,\langle\alpha_m\rangle)$ be a propagation ordering of 
$(\alpha_1,\ldots,\alpha_m)$.
%
%
If there exists a join rule $\rho\in \Sigma^c$ satisfied by
$M$ with a null or a constant $t$ in the join variables, then
we consider the set of join atoms in the body of $\rho$ w.r.t. the term $t$, 
say $A\subseteq M$.
First, we substitute a term $t$ of some $\alpha\in A$ in position $l$, with the corresponding term $\langle t,j,k\rangle$ of $\langle\alpha\rangle$, that can be considered as a fresh null.
This new atom is denoted by $\alpha'$, so that $\alpha'[l]=\langle t,j,k\rangle$.
Then, for each $\alpha_i\in M$ such that $\langle \alpha_i \rangle[l] = \langle t,j,k\rangle$, for some position $l$,
we set $\alpha_i'[l] = \langle t,j,k\rangle $.
Otherwise, $\alpha_i'[l] = \alpha_i[l]$.
%
In this way, we build an instance $M'=\{\alpha':\alpha\in M\}$ of $\Sigma$, and a homomorphism $h'$ such that $h'(\langle t,j,k\rangle )=t$, 
for each introduced fresh null $\langle t,j,k\rangle $ to substitute $t$.
By construction, it holds that $h'(\alpha')=\alpha$, so that
$h'(M') = M$.
Note that, by construction, $M'$ is a well-supported finite instance of $D^c\cup\Sigma^c$.

Therefore, it remains to prove that $M'$ is a model of $D^c\cup\Sigma^c$.
By contradiction, suppose that $M'$ is not a model.
Hence, there exists a rule $\rho\in\Sigma^c$ such that $M'\models body(\rho)$, and $M'\not\models head(\rho)$. We distinguish two cases.
\begin{itemize}
\item[(i)] First, suppose that $\rho$ is not a join rule. 
Then, there exists a safe substitution $\hat{\varsigma}$, mapping each variable in the atoms of $\rho$ into a different null,
so that $\hat{\varsigma}(\rho)^c\in \Sigma^c_a$, as it is not a harmless rule of Shy.
By hypothesis, $M'\models body(\rho)$, so that there exists a homomorphism $h''$ such that $h''(body(\rho))\subseteq M'$.
Therefore, $h'(h''(body(\rho)))\subseteq h'(M')=M$, and so
$M\models body(\rho)$.
Hence, also $M\models body(\hat{\varsigma}(\rho)^c)$.
As $M$ is a model of $\Sigma^c_a$, then $M\models head(\hat{\varsigma}(\rho)^c)$.
Therefore, there exists a homomorphism $h'''$ such that $h'''(head(\hat{\varsigma}(\rho)^c))=\alpha_j$, for some $j\in\{1,\ldots,m\}$.
Hence, $\alpha_j\in M$. Therefore, $\alpha_j'\in M'$.
Moreover, $\alpha_j'\models head(\rho)$, as $h'(\alpha_j')=\alpha_j\models head(\rho)$.
Therefore, $M'\models head(\rho)$.

\item[(ii)] Now, suppose that $\rho$ is a join rule. 
Since, by hypothesis, $M'\models body(\rho)$, then, 
the join variables in the body of $\rho$ are instantiated by the same null, as $D^c\cup\Sigma^c$ is a constant-free logical theory.
However, by construction of $M$,
it is not possible that the same term comes from an instantiation of two different existential variables,
since we replaced each such instantiation with a fresh null in at least one joined term.

\end{itemize}
Therefore, $M'$ is a well-supported finite model of $D^c\cup\Sigma^c$.
\end{proof}

\end{document}